\pgfplotsset{every axis/.append style={font=\scriptsize}}
\newtheorem{proposition}{Proposition}
\newtheorem{theorem}{Theorem}
\newtheorem{definition}{Definition}
\newtheorem{lemma}{Lemma}
\newtheorem{remark}{Remark}
\newlength\figureheight
\newlength\figurewidth
\DeclareMathOperator*{\argmin}{arg\; min}     % argmin
\newcommand{\I}{{{\rm I}}}
\newcommand{\abs}[1]{\left|#1\right|}
\newcommand{\ie}{{\it i.e.}}
\newcommand{\norm}[1]{\left\lVert#1\right\rVert}
\newcommand{\T}{^{\top}}
\newcommand{\D}{\partial}
\newcommand{\sgn}[1]{{\rm sgn}(#1)}
\title{Robust State Estimation against Sparse Integrity Attacks}
\author{
  % 1st. author
  \alignauthor
  Duo Han\\
  \affaddr{School of EEE}\\
  \affaddr{Nanyang Technological University}\\
  \affaddr{Singapore, 639798}\\
  \email{dhanaa@ntu.edu.sg}
  \alignauthor
  Yilin Mo\\
  \affaddr{School of EEE}\\
  \affaddr{Nanyang Technological University}\\
  \affaddr{Singapore, 639798}\\
  \email{ylmo@ntu.edu.sg}
  \alignauthor
  Lihua Xie\\
  \affaddr{School of EEE}\\
  \affaddr{Nanyang Technological University}\\
  \affaddr{Singapore, 639798}\\
  \email{elhxie@ntu.edu.sg}
}
\date{\today}
\begin{document} \maketitle

\begin{abstract}
We consider the problem of robust state estimation
in the presence of integrity attacks. There are $m$ sensors
monitoring a dynamical process. Subject to the integrity attacks, $p$ out of $m$ measurements can be arbitrarily manipulated. The classical
approach such as the MMSE estimation in the literature may not provide
a reliable estimate under this so-called $(p,m)$-sparse attack. In this work, we propose a robust estimation framework where distributed local measurements are computed first and fused at the estimator based on a convex optimization problem. We show the sufficient and necessary conditions for robustness of the proposed estimator. The sufficient and
necessary conditions are shown to be tight, with a trivial gap.
We also present an upper bound on the damage an attacker can cause when the sufficient condition is satisfied. Simulation results are also given to illustrate the effectiveness of the estimator.
\end{abstract}

\section{Introduction}

Sensor networks have been increasingly applied in various cyber-physical systems (CPSs) such as smart grid~\cite{MassoudAmin2005} or Supervisory Control And Data Acquisition (SCADA) systems~\cite{boyer2002scada}. The sensors, however, are vulnerable to integrity attacks since in most cases they are spatially distributed and cannot be fully protected. Typically, the adversary can control a portion of all sensors and arbitrarily change their measurements during attacks. The objectives for launching such an attack in industrial systems may include using free electricity in smart grid, stealing resources like gasoline from oil caverns, causing economical loss for rivals, etc. Two famous attacks on CPS hampering 
the critical infrastructure
are Maroochy Water incident \cite{slaylessons} and the first SCADA system malware (called Stuxnet) \cite{chen2011lessons}. To sum up, Security in control and estimation systems has received much research attention~\cite{challengessecurity}.

In this article, we focus on the problem of robust state estimation based on compromised sensory data. The classical approach such as Kalman filtering cannot generate a reliable estimate in the presence of attacks. To put it simply, taking all measurements as important, e.g., (weighted) averaging all the measurements is not a good idea for robust estimation since one large measurement will drive the final estimate far deviated from the true value. To be concrete, we consider the problem of estimating the state $x\in\mathbb R^n$ of a dynamical process from measurements collected by $m$ homogenous sensors, where the measurements are subject to Gaussian noise. Integrity attacks are very likely because the sensors cannot be fully protected due to practical reasons such as high maintenance cost. We assume the attacker can only take control of up to $p<m$ sensors since the resources of attacker may be limited and some sensors are physically untouchable. We put no restriction on what the attack is like. In other words, once a sensor is attacked, the sensory data can be arbitrarily manipulated.

\emph{Related Work}: In the context of power systems, the estimation based on irregular sensor data has been formulated as bad data detection problem~\cite{Handschin1975,Mili1985}. A common practice is identifying the bad data or outliers by checking the corresponding residue. But this does not work well for intentional attacks \cite{liu2009,henrik2010,Xie2011,Kim2014}. For example, Liu et al.~\cite{liu2009} showed that it is possible to launch a stealthy attack without being noticed. On the top of that, a so-called framing attack was studied in~\cite{Kim2014}. The detector is very likely to abandon the critical measurement under the framing attacks. Without the critical measurements, the network is unobservable and a stealthy attack is possible.

The robust estimator has been long studied in the field of statistics~\cite{Hampel1974 ,Kassam1985,robust2006,robust2009,Yohai2012,Rousseeuw1992,Hossjer1994,Rousseeuw2012}. The robustness is often quantified by breakdown points, e.g., the percentage of bad sensors, beyond which the estimate is unstable~\cite{Hampel1971,donoho1983notion}. However, the application of robustness analysis has not been extended to the estimation problem of a dynamical system yet.

For dynamical systems, compromised data detection via fault detection and isolation based methods has been extensively studied, \cite{fp-ab-fb:09b,Pasqualetti2011,wirelesscontrol,Fawzi2012,chong2015observability}. However, in most of these works, the system is assumed to be noiseless, which greatly favors the failure detector. Pajic et al.~\cite{Pajic2014} extended \cite{Fawzi2012} by taking the bounded system noise into account. They proved that the worst error is still bounded for all possible attacks once the sufficient condition for exact data recovery in noiseless systems is satisfied. The drawback is that their approach is involved with zero-norm and thus computationally intractable, especially for large scale systems. In \cite{Mo2010,moscs10security}, the authors studied the worst bias an attack can cause through reachability analysis and ellipsoid approximation.

The significance of this work is provide a robust estimation framework for estimating noisy systems compared with the noiseless case in \cite{Fawzi2012}. To mitigate the damage injected by the attacker, we propose a robust estimator based on the convex optimization problem involving $L_1$ regulation which takes an advantage of analytical simplicity over \cite{Pajic2014}. The proposed estimator is shown to provide a robust estimate under some sufficient condition. Furthermore, the upper bound of the gap between the estimate without attacks and that under attacks is also quantified.

The rest of the paper is organized as follows. In Section \ref{section:problemsetup} we formulate the robust estimation problem. We study the sufficient and necessary conditions for robustness in Section \ref{section:main}. We analyze the estimation performance in Section \ref{section:performanceanalysis}. Simulations are illustrated in Section \ref{section:simulation}. The concluding remarks are given in Section \ref{section:conlusion}. 

\textit{Notations}: The $i$th entry of the vector $u$ is denoted as $u[i]$. The $L_p$ norm of the vector $u$ is denote as $\norm{u}_p$. If unspecified, $\norm{u}$ means the $L_2$ norm of $u$ by default. $\lfloor v\rfloor$ means the largest integer that is less than the scalar $v$.
%%% Local Variables:
%%% TeX-master: "main"
%%% TeX-command-default: "Latexmk"
%%% End:

\section{Problem Setup}\label{section:problemsetup}
\begin{figure}
  \centering
  \begin{tikzpicture}[block/.style={rectangle,rounded corners,draw=black,fill=blue!20,thick}]

    \node[block,label=below:$x(k)$] (process) at (3.0,-6) {Process};

    \node[block] (sensora) at (0.2,-4) {Sensor 1};
     edge [<-,semithick] (processa);
    \draw [semithick,<-] (sensora)--node[midway,left]{$y_1(k)$}(0.2,-6)--(process);

    \node[block] (sensorb) at (3.0,-4) {Sensor 2};
    \draw [semithick,<-] (sensorb)--node[midway,left]{$y_2(k)$}(3,-5)--(process);

    \node[text width=2cm] at (5.3,-4) {$\cdots$};

    \node[block] (sensorn) at (6.1,-4) {Sensor m};
    \draw [semithick,<-] (sensorn)--node[midway,left]{$y_m(k)$}(6.1,-6)--(process);

    \node[cloud, thick, cloud puffs=15.7, cloud ignores aspect, minimum width=3cm, minimum height=1cm, align=center, draw] (network) at (3,-2.5) {Network};

    \node[block] (attacka) at (1.3,-5) {Attack};
    \node[block] (attackb) at (4.1,-5) {Attack};

    \draw [semithick,->] (attacka)--(1.3,-4)--(sensora);
    \draw [semithick,->] (attackb)--(4.1,-4)--(sensorb);

    \draw [semithick,->] (sensora)--node[midway,left]{$\tilde x_1(k)$}(0.2,-3)--(0.2,-2.5)--(network);
    \draw [semithick,->] (sensorb)--node[midway,left]{$\tilde x_2(k)$}(3,-3)--(network);
    \draw [semithick,->] (sensorn)--node[midway,left]{$\tilde x_m(k)$}(6.1,-3)--(6.1,-2.5)--(network);

    %\draw [semithick,<-] (network)--(scheduler);

    \node[block] (scheduler) at (3,-1) {Remote Estimator} edge [<-,semithick] (network);

    \end{tikzpicture}
    \caption{System Block Diagram.}
\label{fig:model}
    \end{figure}
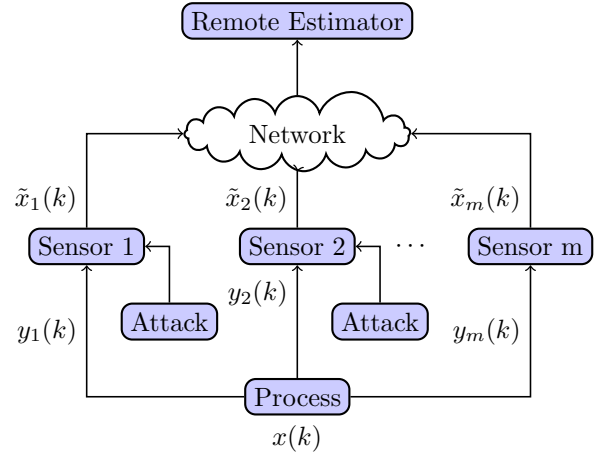

\subsection{System Model}
Assume that $m$ homogenous sensors are measuring the following LTI system (see Fig. \ref{fig:model}):
\begin{align}\label{eq:system}
 x(k+1) = Ax(k) + w(k).
\end{align}
The measurement equation for the $i$th sensor is given by
\begin{align}\label{eq:goodsystem}
 y_i(k) = C x(k) + \varepsilon_i(k),
\end{align}
where $x(k)\in\mathbb R^n$ is the state, $y_i(k)\in \mathbb R^{l}$ is the measurement collected by the $i$th sensor, $w(k) \in \mathbb R^{n}$ and $v(k) \in \mathbb R^{l}$ are the process noise and measurement noise for the $i$th sensor, respectively. The noise $w(k)$ and $\varepsilon_i(k)$'s are Gaussian distributed, \ie, $$w(k)\sim\mathcal N(0,Q),~\varepsilon_i(k)\sim\mathcal N(0,R).$$ The noises are assumed to be independent from each other across different time instants and sensors. Denote the tall measurement matrix $H\triangleq [C\T,C\T,\ldots,C\T]\T\in\mathbb R^{ lm \times n}$ and $y(k)\triangleq[y_1(k)\T,y_2(k)\T,\ldots,y_m(k)\T]\T$.. Denote $\Sigma = {\rm diag}(R,\ldots,R)$. The initial state $x(0)$ is Gaussian distributed with mean $\mu_0$ and variance $P_0$, and is independent from all noises. Assume that $(A,C)$ is observable and $(A,Q^{\frac{1}{2}})$ is controllable.

Kalman filter is well known as the recursive minimum mean square error (MMSE) estimator:
\begin{align*}
\hat x_{KF}(k) &= (A-K(k)HA)\hat x_{KF}(k-1) + K(k)y(k),\\
P^-(k) &= A P(k-1) A\T  + Q,\\
P(k) &= (\I_n-K(k)H)P^-(k),
\end{align*}
where the Kalman gain is given by
\begin{align}
K(k) = P^-(k) H\T (H P^-(k) H\T + \Sigma)^{-1}.\label{eq:k}
\end{align}
The state error covariance $P^-(k)$ converges exponentially fast to $\overline P$ which is obtained by solving the following discrete algebraic Riccati equation (DARE):
\begin{align}
X= A X A\T - A X H\T (H X H\T + \Sigma)^{-1} H X A\T + Q. \label{eq:dare}
\end{align}
Therefore, we assume the Kalman filter to be in the steady state, \ie, $P(k)=(\I_n-KH)\overline P$ and $K(k)=K$ from \eqref{eq:k}.

Due to the homogeneousness of the sensors, we know that $K=[G,\ldots,G]\T~,G\in\mathbb R^{n\times l}$. The Kalman filter can be equivalently rewritten as:
\begin{align}
\hat x_{KF}(k) = \frac{1}{m}\sum_{i\in\mathcal S}\tilde x_i(k),\label{eq:lse}
\end{align}
where
\begin{align}
\tilde x_i(k) &= (A-KHA)\tilde x_i(k-1) + m Gy_i(k),\label{eq:localestimate}
\end{align}
This means the estimation process at the estimator can be decomposed into $m$ sub-processes each of which only involves measurements from one sensor. This decomposition renders distributed estimation possible. To be specific, the sensor can locally compute $\tilde x_i(k)$ based on its own measurements and then the information fusion of all local estimates occurs at the remote estimator. It is worth noting that such distributed estimation is more resilient to attacks than the centralized estimation (all sensors transmit raw measurements). Since each local estimate of one sensor encodes all its historical measurements, corruption of one local estimate at some time instant causes little damage to the estimation.

Even if the sensor lacks computational capability and can only transmit raw measurements, each local estimation process can be computed at the central estimator. Therefore, without loss of generality, we assume each sensor computes a local estimate based on \eqref{eq:localestimate} and sends it to the estimator (see Fig. \ref{fig:model}).

%Preprocessing the raw measurements has been shown to provide an overall better estimation performance compared with the direct transmission of the raw measurements in \cite{sun2004multi} in presence of model errors or attacks. We thus assume the sensors have adequate computational capability. In other words, each sensor can preprocess the local measurement $y_i(k)$ and transmit a local estimate to the remote estimator. Define the local information set as
%\begin{equation}
%  \mathcal Y_i(k) \triangleq \{y_i(1), \ldots,y_i(k)\},
%  \label{eq:informationset}
%\end{equation}
%In general, we define the mapping $h_i^k:\{\mathcal Y_i(k)\}\mapsto \mathbb R^n$ and the local estimate is give by
%\begin{align}
%\tilde x_i(k) = h_i^k(y_i(1),\ldots,y_i(k)).\label{eq:estimate}
%\end{align}
%The extreme case that the sensor has no computational capability can also be treated as a special case of $h_i^k$. By letting $h_i^k(y_i(1),\ldots,y_i(k)) = y_i(k)$, the $i$th sensor directly sends the current measurement $y_i(k)$ to the estimator.

%\begin{equation}
%  \mathcal{Y}_k \triangleq \{y_0, \ldots,y_k\},
%  \label{eq:informationset}
%\end{equation}
%with $\mathcal{Y}_{-1}\triangleq \emptyset$. Further define
%\begin{align*}
%  &\hat x(k|k-1)\triangleq \mathbb{E}[x_k| \mathcal{Y}_{k-1}],
%  ~~\hat x(k|k) \triangleq \mathbb{E}[{x}_k| \mathcal{Y}_k],\\
%  &e(k|k-1)\triangleq {x}_k - \hat x(k|k-1),
%  ~~e(k|k) \triangleq {x}_k - \hat x(k|k),\\
%  &P(k|k-1)\triangleq \mathbb{E}[e(k|k-1)e(k|k-1)\T|\mathcal{Y}_k],\\
%  &P(k|k)\triangleq \mathbb{E}[e(k|k)e(k|k)\T| \mathcal{Y}_{k-1}].
%\end{align*}

\subsection{Attack Model}
The attacker launches an integrity attack to the sensory data in different fashions. For example, it can change the physical environment to mislead the sensors or it hacks the onboard sensor chip or it can manipulate the data packet during the sensor-to-estimator transmission. No matter in which way the attack is launched, we have the following equation:
\begin{align}\label{eq:badsystem}
  z_i(k) = \tilde x_i(k) + a_i(k),
\end{align}
where $z_i(k) \in \mathbb R^{n}$ is the ``manipulated'' local estimate and $a_i(k)\in \mathbb R^{n}$ is the attack vector. In other words, the attacker can change the local estimate of the $i$th sensor by $a_i(k)$. Define the local estimation error as $e_i(k)\triangleq {x}_k - \tilde x_i(k)$. Then we have
\begin{align}
  z_i(k) = x(k) +e_i(k) + a_i(k).
\end{align}
For concise notations, denote
\begin{align}
  \tilde x(k)&\triangleq[\tilde x_1(k)\T,\tilde x_2(k)\T,\ldots,\tilde x_m(k)\T]\T,\\
  z(k)&\triangleq[z_1(k)\T,z_2(k)\T,\ldots,z_m(k)\T]\T, \\
  e(k)&\triangleq[e_1(k)\T,e_2(k)\T,\ldots,e_m(k)\T]\T, \\
  a(k)&\triangleq[a_1(k)\T,a_2(k)\T,\ldots,a_m(k)\T]\T.& \nonumber
\end{align}

Denote the index set of all sensors as $\mathcal S\triangleq\{1,2,\ldots,m\}$. For any index set $\mathcal I\subseteq \mathcal S$, define the complement set to be $\mathcal I^c\triangleq \mathcal S\backslash\mathcal I$. In our attack model, we assume that the attacker can only compromise at most $p$ sensors but can arbitrarily choose $a_i(k)$. The index set of malicious sensors is assumed to be time invariant. Formally, a $(p,m)$-sparse attack can be defined as
\begin{definition}[$(p,m)$-sparse attack]
  A vector $a$ is called a $(p,m)$-sparse attack if there exists an index set $\mathcal I\subset \mathcal S$, such that:
  \begin{enumerate}[label=(\roman*)]
    % \item $a_i \in \mathbb R^{m_i},~\forall i\in \mathcal I;$
  \item $\norm{a_i(k)} = 0,~\forall i\in \mathcal I^c ;$
  \item $\abs{\mathcal I} \leq p.$
  \end{enumerate}
\end{definition}
Define the collection of a possible index set of malicious sensors as
\[
  \mathbb C\triangleq\{\mathcal I:\mathcal I\subset\mathcal S,\abs{\mathcal I} = p\}.
\]
The set of all possible $(p,m)$-sparse attacks is denoted as
\[
  \mathcal A =\mathcal A(k) \triangleq \bigcup_{\mathcal I\in\mathbb C}\{a(k): \norm{a_i(k)} = 0, i\in \mathcal I^c\},\forall k.
\]

After introducing the $(p,m)$-sparse attack, we need to formally define the robustness.
\begin{definition}[Robustness]
  An estimator $$g:\mathbb R^{mn}\mapsto \mathbb R^n$$ which maps the measurements $z(k)$ to a state estimate $\hat x(k)$ is said to be robust to the $(p,m)$-sparse attack if it satisfies the following condition:
  \begin{align}
    \norm{g(\tilde x(k))-g(\tilde x(k)+a(k))}\leq \mu(\tilde x(k)),~\forall a\in\mathcal A,\label{eq:defrobust}
  \end{align}
  where $\mu:\mathbb R^{mn}\mapsto \mathbb R$ is a real-valued mapping on $\tilde x(k)$.
\end{definition}
The robustness implies that the disturbance on the state estimate caused by an arbitrary attack is bounded. A trivial robust estimator is $g(y)=0$ which provides a very poor estimate. Therefore, another desirable property for an estimator is translation invariance, which is defined as follows:
\begin{definition}[Translation invariance]
An estimator $g$ is translation invariant if $g(z+Eu)=u+g(z),~\forall u\in\mathbb R^n$, where $E \triangleq [\I_n,\ldots,\I_n]\T$.
\end{definition}

\begin{remark}
  Notice that if an estimator is robust and translation invariant, then
  \begin{align*}
    &\| g(\tilde x(k)) - g(\tilde x(k) + a(k) ) \|\\
    &= \|E x(k)+ g(e(k)) - E x(k) + g(e(k)+a(k)) \| \\
    &= \|g(e(k)-g(e(k)+a(k))\|\leq \mu(e(k)).
  \end{align*}
  Therefore, the maximum bias that can be injected by an adversary is only a function of the noise $e(k)$.
\end{remark}

\subsection{A Robust Estimator}
Apparently, the linear estimator~\eqref{eq:lse} cannot give an estimate with bounded error even when only one estimate is arbitrarily manipulated. In other words, there is a conflict between the MMSE optimality and the robustness against attacks.

The main task of this work is to design a robust estimator which achieves a desirable tradeoff between the MMSE optimality and the robustness, and investigate the sufficient and necessary conditions to be robust to $(p,m)$-sparse attacks. To this end, a general estimator is proposed as follows:
  \begin{align}
    \hat x(k) \triangleq g(z(k))= \argmin_{\hat x(k)} \sum_{i\in\mathcal S} \varphi(z_i(k) -  \hat x(k)), \label{eq:generalestimator}
  \end{align}
where $\varphi:\mathbb R^n\mapsto \mathbb R$. We notice that to recover Kalman filter we can choose $\varphi$ to be $L_2$ norm. The candidate functions of $\varphi$ may include $L_p$ norm or LASSO \cite{tibshirani1996regression}, to just name a few.

Though there are many important estimators as special cases of \eqref{eq:generalestimator}, we mainly focus on the properties of the following concrete estimator in the rest of this paper. The same methodology can be extended to other $\varphi$'s. Pajic et al.~\cite{Pajic2014} proposed the following robust estimator in the presence of integrity attack:
  \begin{align*}
    & \mathop{\textit{minimize}}\limits_{\hat x(k),a,e(k)}&
    & \sum_{i\in\mathcal S}\norm{e_i(k)}_2^2\\
    &\text{subject to}&
    &z_i(k) = \hat x(k) + e_i(k)+a_i(k), \forall i,\\
    &&& a\in \mathcal A.
  \end{align*}
  However, the minimization problem involves zero-norm, and thus is difficult to solve in general. A commonly adopted approach is to use $L_1$ relaxation to approximate zero-norm, which leads to the following minimization problem:
  \begin{align}
    & \mathop{\textit{minimize}}\limits_{\hat x(k),a,\varpi(k)}&
    & \sum_{i\in\mathcal S}\norm{\varpi_i(k)}_2^2+\lambda \sum_{i\in\mathcal S}\norm{a_i(k)}_1\label{eq:optlasso}\\
    &\text{subject to}&
    & z_i(k) = \hat x(k) + \varpi_i(k)+a_i(k), \forall i.\nonumber
  \end{align}
      %       where the minimizer $\hat x$ is chosen to be the estimate.
  If we define the following function $F:\mathbb R^n \mapsto \mathbb R$:
  \begin{align}
    F(u)~\triangleq ~&\mathop{\textit{minimize}}\limits_{v\in\mathbb R^n}&
    &  \norm{u-v}_2^2 + \lambda  \norm{v}_1, \label{eq:lasso2}
  \end{align}
  then one can easily prove that the optimization problem \eqref{eq:optlasso} can be rewritten as
  \begin{align}
    \hat x(k) \triangleq g(z(k))= \argmin_{\hat x(k)} \sum_{i\in\mathcal S} F(z_i(k) -  \hat x(k)).\label{eq:lasso}
  \end{align}

In the next section, we shall present sufficient and necessary conditions for the robustness of the estimator \eqref{eq:lasso}. For concise notation, we will omit the time index $k$ if it is clear from the context.

\section{Robust Analysis}\label{section:main}
We provide an answer to the following question in this section: in what condition the proposed estimator in \eqref{eq:lasso} satisfies the robustness requirement \eqref{eq:defrobust}?

Before preceding to the main results, we give an explicit form of $F(u)$ given in \eqref{eq:lasso2}.
We can decompose $F(u)$ by letting $F(u) = \sum_{i=1}^n f(u_i)$, where $u_i$ is the $i$th entry of $u$ and $f(\tau):\mathbb R \mapsto \mathbb R$ is given by
    \begin{align}
    f(\tau)~\triangleq ~&\mathop{\textit{minimize}}\limits_{v\in\mathbb R}&
    &  (\tau-v)^2 + \lambda \abs{v}, \label{eq:lassoscalar}
  \end{align}
We define the RHS of \eqref{eq:lassoscalar} as
  \begin{align*}
    \pi(v)~\triangleq~ (\tau-v)^2 + \lambda \abs{v} .
  \end{align*}
Applying the KKT conditions, we know that
  \begin{align*}
 0\in \partial \pi(a^*) = -2\tau+2v^*+\sgn{v}\lambda.
  \end{align*}
Since $\pi(v)$ is not differentiable at $v=0$, by calculating the subgradient we have that $$v^*=0,~\textrm{if } \abs{\tau}\leq \frac{\lambda}{2}.$$
For $v^* \neq 0$, by letting
$$0 = -2\tau+2v^*+\sgn{v^*}\lambda,$$
we obtain that
\begin{align*}
v^* =  \left\{
             \begin{array}{ll}
               \tau-\frac{\lambda}{2}, & \textrm{if } \tau > \frac{\lambda}{2}, \\
               \tau+\frac{\lambda}{2}, & \textrm{if } \tau < - \frac{\lambda}{2}.
             \end{array}
           \right.
\end{align*}
Therefore, we have $f$ explicitly written as:
  \begin{align}
    f(\tau) = \left\{
             \begin{array}{ll}
               \tau^2, & \textrm{if } \abs{\tau} \leq \frac{\lambda}{2}, \\
               \lambda \abs{\tau}-\frac{\lambda^2}{4}, & \textrm{if } \abs{\tau}> \frac{\lambda}{2}.
             \end{array}
           \right.\label{eq:f}
  \end{align}

In the next proposition we present some useful properties of $f$ and $F$.
\begin{proposition}\label{prop:property}
The properties of $f$ and $F$ are summarized as follows:
\begin{enumerate}[label=(\roman*)]
\item $f$ and $F$ are convex.
\item $f$ and $F$ are symmetric, \ie, $f(u) = f(-u)$.
\item $f$ and $F$ are non-negative and $f(0) = 0$.
\item $f$ and $F$ are twice differentiable.
\end{enumerate}
\end{proposition}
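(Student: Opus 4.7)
The plan is to establish each property for the scalar function $f$ first and then transfer it to $F$ via the additive decomposition $F(u)=\sum_{i=1}^n f(u_i)$. Since each summand depends on only one coordinate, convexity, symmetry, non-negativity, and smoothness of $F$ all reduce to the corresponding properties of $f$, so essentially all the work is one-dimensional.

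For (i), I would observe that $f$ is the infimal convolution of the two convex functions $\tau\mapsto\tau^2$ and $\tau\mapsto\lambda|\tau|$ (which is precisely the definition in \eqref{eq:lassoscalar}), and infimal convolution preserves convexity. As a concrete cross-check, the explicit form \eqref{eq:f} gives $f'(\tau)=2\tau$ on $|\tau|\le\lambda/2$ and $f'(\tau)=\lambda\,\sgn{\tau}$ on $|\tau|>\lambda/2$; the two expressions agree at $\tau=\pm\lambda/2$ with common value $\pm\lambda$, so $f'$ is continuous and non-decreasing, which is equivalent to convexity of $f$. Convexity of $F$ then follows because a sum of convex functions is convex. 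For (ii), the change of variable $v\mapsto -v$ inside \eqref{eq:lassoscalar} yields $f(-\tau)=f(\tau)$, and $F(-u)=F(u)$ is then immediate. For (iii), on the inner branch $f(\tau)=\tau^2\ge 0$ with $f(0)=0$, and on the outer branch $|\tau|>\lambda/2$ implies $\lambda|\tau|>\lambda^2/2>\lambda^2/4$, so $f$ is strictly positive there; summing over coordinates gives $F\ge 0$ and $F(0)=0$.

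The main obstacle is (iv). Differentiating \eqref{eq:f} a second time gives $f''(\tau)=2$ on $|\tau|<\lambda/2$ and $f''(\tau)=0$ on $|\tau|>\lambda/2$, so the one-sided second derivatives disagree at $\tau=\pm\lambda/2$ and $f$ is not $C^2$ in the strict classical sense. I would therefore read the property in the weaker and more natural sense that $f$ is continuously differentiable with $f''$ existing and bounded on the complement of the two kink points --- equivalently, $\nabla F$ is Lipschitz and $\nabla^2 F$ exists almost everywhere as a diagonal matrix with entries $f''(u_i)$. This is the version that is actually needed for the KKT-style analysis of the optimizer \eqref{eq:lasso} in the next section, so I would state (iv) in that form rather than claiming strict $C^2$ smoothness, and let the separability of $F$ close out the argument.
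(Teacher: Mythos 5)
The paper offers no proof of this proposition---it is dismissed as ``easy to verify and omitted''---so there is no argument of record to compare yours against. Judged on its own, your treatment of (i)--(iii) is correct and complete: the observation that $f$ in \eqref{eq:lassoscalar} is the infimal convolution of $\tau\mapsto\tau^2$ with $\tau\mapsto\lambda\abs{\tau}$ (equivalently, a Moreau envelope of $\lambda\abs{\cdot}$) gives convexity immediately, the derivative cross-check against \eqref{eq:f} is consistent, and the reduction of every property of $F$ to the scalar $f$ via the separable decomposition $F(u)=\sum_{i=1}^n f(u_i)$ is exactly the reduction the paper relies on implicitly throughout Section 3.

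The substantive point is your handling of (iv), and you are right to flag it: as literally stated, (iv) is false. The explicit form \eqref{eq:f} is the Huber function, whose second derivative jumps from $2$ to $0$ at $\tau=\pm\lambda/2$, so $f$ is $C^1$ with a Lipschitz gradient but is not twice differentiable at the two kink points, and consequently $F$ is not twice differentiable on the corresponding hyperplanes. Your proposed weaker reading---$\nabla F$ Lipschitz, with a diagonal a.e.\ Hessian---is the right repair, and it is also all the paper ever uses: the subsequent analysis works only with $\phi=\nabla F$ from \eqref{eq:gradient}, the directional derivative $\D h(u,v,t)/\D t=\phi(v+tu)\T u$, and the monotonicity of that derivative supplied by convexity in Lemma~\ref{lemma:1}; no classical second derivative ever appears. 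So your proof stands, with the caveat that claim (iv) of the proposition should be weakened to continuous differentiability rather than proved as written.
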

The results are easy to verify and omitted here.

To obtain the sufficient and necessary conditions for robustness, we first need to show some findings on the derivative of $F$. To facilitate the analysis, we define two functions. For all $u,v\in \mathbb R^{n}$ and $t\in \mathbb R$, define $h:\mathbb R^n\times\mathbb R^n\times\mathbb R\mapsto \mathbb R$ as follows:
\[
h(u,v,t) \triangleq F(v+tu).
\]
Define the mapping $\phi:\mathbb R^n\mapsto \mathbb R^n$ ,
\begin{align}
\phi(u) \triangleq \nabla F(u) = [\nabla f(u[1]),\ldots,\nabla f(u[n])]\T, \label{eq:gradient}
\end{align}
where
\begin{align}\label{eq:gr1}
\nabla f(u[i])=\left\{
             \begin{array}{ll}
               2\abs{u[i]}, & \textrm{if } \abs{u[i]} \leq \frac{\lambda}{2}, \\
               \sgn{u[i]}\lambda, & \textrm{if } \abs{u[i]}> \frac{\lambda}{2},
             \end{array}
           \right.
\end{align}
where $\sgn{\cdot}$ is defined as: if $s = \sgn{v}$, then
\[
s[i]=\left\{
             \begin{array}{ll}
               +1, & \textrm{if } v[i]\geq 0, \\
               -1, & \textrm{if } v[i] < 0.
             \end{array}
           \right.
\]
Notice that a useful equality in the sequel is
\[
\frac{\D h(u,v,t)}{\D t} = \phi(v+tu)\T u.
\]

\begin{lemma}\label{lemma:1}
  The following statements are true:
  \begin{enumerate}[label=(\roman*)]
  \item The limit below is well defined for all $u\in\{u\in \mathbb R^{n}:\norm{u}<\infty\}$, \ie,
  \begin{align}
    C(u) \triangleq \lim_{t\rightarrow\infty} \frac{\D h(u,0,t)}{\D t} = \lambda \norm{u}_1,
    \label{eq:sublinear}
  \end{align}

  \item The following pointwise limit holds:
    \begin{align}
      \lim_{t\rightarrow\infty} \frac{\D h(u,v,t)}{\D t}= C(u).
      \label{eq:deltalimit}
    \end{align}
    Moreover, the convergence is uniform on any compact set of $(u,v)$.

    \item For any $u,v$, we have that
    \begin{align}
      \phi(v+u)\T u \leq C(u).
      \label{eq:deltaw}
    \end{align}

  \end{enumerate}
\end{lemma}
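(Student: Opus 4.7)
The plan is to exploit the explicit piecewise form of $\nabla f$ in \eqref{eq:gr1}: $|\nabla f(\tau)| \leq \lambda$ everywhere, and $\nabla f(\tau) = \lambda\, \mathrm{sgn}(\tau)$ as soon as $|\tau|>\lambda/2$ (i.e.\ the derivative \emph{saturates}). Throughout, I use that $\partial h(u,v,t)/\partial t = \phi(v+tu)^\top u = \sum_{i=1}^n \nabla f(v[i]+tu[i])\, u[i]$.

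For part (i), I would evaluate the sum coordinate-wise at $v=0$. If $u[i]=0$, that term is identically zero. If $u[i]\neq 0$, then for $t > \lambda/(2|u[i]|)$ one has $|tu[i]| > \lambda/2$ and hence $\nabla f(tu[i]) = \lambda\,\mathrm{sgn}(u[i])$, so $\nabla f(tu[i])\, u[i] = \lambda |u[i]|$. Summing yields the limit $\lambda\|u\|_1$, which proves \eqref{eq:sublinear}.

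For part (ii), the pointwise limit is the same calculation shifted by $v$: for $u[i]\neq 0$ and $t$ sufficiently large, $\mathrm{sgn}(v[i]+tu[i]) = \mathrm{sgn}(u[i])$ and $|v[i]+tu[i]|>\lambda/2$, so the saturation still applies. The uniform-convergence part is the only place that needs care. On a compact $K$, choose $M$ with $\|v\|_\infty \leq M$ for all $(u,v)\in K$, and write the error as
\begin{equation*}
\Bigl|\tfrac{\partial h(u,v,t)}{\partial t} - C(u)\Bigr| \leq \sum_{i=1}^n \bigl|\nabla f(v[i]+tu[i])\,u[i] - \lambda|u[i]|\bigr|.
\end{equation*}
For any $\epsilon>0$, I would split the index set at level $\delta := \epsilon/(4n\lambda)$: indices with $|u[i]|<\delta$ each contribute at most $2\lambda|u[i]|<2\lambda\delta$ (using $|\nabla f|\leq\lambda$), totalling at most $\epsilon/2$; indices with $|u[i]|\geq\delta$ contribute $0$ once $t > (M+\lambda/2)/\delta$, which is a bound independent of $(u,v)\in K$. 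This gives uniformity.

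For part (iii), the cleanest route is via the convexity of $F$ from Proposition~\ref{prop:property}(i). For fixed $u,v$, the map $t \mapsto h(u,v,t) = F(v+tu)$ is convex in $t$, hence its derivative $t \mapsto \phi(v+tu)^\top u$ is nondecreasing in $t$. Evaluating at $t=1$ and letting $t\to\infty$, and invoking part (ii), gives
\begin{equation*}
\phi(v+u)^\top u \;\leq\; \lim_{t\to\infty} \phi(v+tu)^\top u \;=\; C(u).
\end{equation*}
(Alternatively, the crude coordinate bound $\nabla f(v[i]+u[i])\,u[i] \leq |\nabla f(v[i]+u[i])|\,|u[i]| \leq \lambda|u[i]|$ and summation over $i$ yields the same inequality directly.) The only nontrivial piece is the uniformity in (ii); everything else is a bookkeeping exercise on the explicit formula for $\nabla f$.
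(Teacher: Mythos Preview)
Your argument is correct in all three parts. The only substantive difference from the paper is in part~(ii), the uniform-convergence step: you give an explicit $\epsilon$--$\delta$ splitting (small $|u[i]|$ versus large $|u[i]|$) with a quantitative threshold $t>(M+\lambda/2)/\delta$, whereas the paper dispatches this in one line via Dini's theorem, observing that $t\mapsto \partial h(u,v,t)/\partial t$ is monotone nondecreasing in $t$ (by convexity of $F$) and converges pointwise to the continuous limit $C(u)=\lambda\|u\|_1$, so the convergence is automatically uniform on compacta. The Dini route is shorter and cleaner; your direct argument is more self-contained and actually yields an explicit rate, which can be useful downstream. For part~(iii), your primary route (monotonicity of the derivative in $t$, then pass to the limit) is slightly different from the paper's, which just uses the coordinatewise bound $\nabla f(\cdot)\,u[i]\leq \lambda|u[i]|$---precisely the alternative you mention in parentheses---but both are immediate.
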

\begin{proof}
\begin{enumerate}[label=(\roman*)]
\item It is easy to see that
    \begin{multline*}
    \lim_{t\rightarrow\infty} \frac{\D h(u,0,t)}{\D t} \\
    = \lim_{t\rightarrow\infty} \phi(tu)\T u=\lambda \sum_{i=1}^n \sgn{u[i]}u[i] =  \lambda \norm{u}_1.
    \end{multline*}
\item We have that
    \begin{align*}
    \lim_{t\rightarrow\infty} \frac{\D h(u,v,t)}{\D t}=\lim_{t\rightarrow\infty}\phi(v+tu)\T u=\lambda \norm{u}_1.
    \end{align*}
    Due to the convexity of $F$, $\D h(u,v,t)/ \D t$ is monotonically non-decreasing with respect to $t$. Furthermore, $C(u)$ is continuous since it is a norm. Therefore, by Dini's theorem~\cite{rudin1964principles}, $\D h(u,v,t)/ \D t$ converges uniformly to $C(u)$ on a compact set of $(u,v)$.

\item From \eqref{eq:gradient}, we know that
    \[
    \phi(v+u)\T u \leq  \sum_{i=1}^n \lambda u[i]\leq \lambda \norm{u}_1.
    \]
    Therefore, we conclude that $\phi(v+u)\T u \leq C(u)$ for any $u,v$.

\end{enumerate}

\end{proof}

\begin{remark}
Intuitively speaking, one can interpret $F$ as a
potential field and the derivative of $F$ as the force generated by each sensor. By \eqref{eq:deltaw}, we know that the force from the potential field $F$ along the $u$ direction cannot
exceed $C(u)$. On the other hand, Equation \eqref{eq:deltalimit} implies that this bound is achievable.
\end{remark}

We are now ready to give the sufficient condition for the robustness of the estimator.
\begin{theorem}[Sufficient condition]
  If the following conditions hold:
    \begin{align}
      2p < m,\label{eq:sufficiency}
    \end{align}
  then the estimator $g$ is robust.
  \label{theorem:sufficient}
\end{theorem}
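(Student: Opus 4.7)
The plan is to use the first-order optimality condition at both minimizers, project it onto the displacement $v \triangleq g(\tilde x + a) - g(\tilde x)$, and show that the at-most-$p$ compromised sensors cannot overcome the restoring force of the $m - p > p$ honest ones. Since $F$ is convex and continuously differentiable (Proposition~\ref{prop:property}), the unattacked minimizer $\hat x_0 \triangleq g(\tilde x)$ and any attacked minimizer $\hat x_a \triangleq g(\tilde x + a)$ satisfy
\begin{equation*}
  \sum_{i \in \mathcal S} \phi(\tilde x_i - \hat x_0) = 0, \qquad \sum_{i \in \mathcal S} \phi(\tilde x_i + a_i - \hat x_a) = 0.
\end{equation*}
Writing $e_i \triangleq \tilde x_i - \hat x_0$, taking the inner product of the second equation with $v$ and splitting across $\mathcal I$ and $\mathcal I^c$ yields
\begin{equation*}
  \sum_{i \in \mathcal I^c} \phi(e_i - v)\T v \;=\; -\sum_{i \in \mathcal I} \phi(e_i + a_i - v)\T v.
\end{equation*}

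From the closed form \eqref{eq:gr1}, every coordinate of $\phi(\cdot)$ lies in $[-\lambda,\lambda]$, so $|\phi(\cdot)\T v| \leq \lambda \norm{v}_1$ for any argument, and the right-hand side above is bounded in absolute value by $p\lambda\norm{v}_1$ regardless of the adversary's choice of $a$. For the left-hand side, I parametrize $v = \rho w$ with $w$ on the unit $L_2$-sphere $S^{n-1}$ and apply Lemma~\ref{lemma:1}(ii) to each $i \in \mathcal I^c$ with direction $-w$ and base $e_i$, obtaining $\phi(e_i - \rho w)\T w \to -\lambda \norm{w}_1$ as $\rho \to \infty$. Because the convergence is uniform on the compact set $S^{n-1}\times \{e_i\}$ and there are only finitely many indices in $\mathcal I^c$, I extract a single threshold $\rho_0 = \rho_0(\tilde x, \delta)$ such that, for every $\rho > \rho_0$ and every $w \in S^{n-1}$,
\begin{equation*}
  \sum_{i \in \mathcal I^c} \phi(e_i - \rho w)\T (\rho w) \;\leq\; -(m - p)\lambda \rho \norm{w}_1 + (m - p)\delta\rho.
\end{equation*}

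Combining the two sides, whenever $\norm{v}_2 > \rho_0$ I obtain $(m - 2p)\lambda \norm{w}_1 \leq (m - p)\delta$; since $\norm{w}_1 \geq \norm{w}_2 = 1$ and the hypothesis \eqref{eq:sufficiency} gives $m - 2p > 0$, choosing $\delta < (m-2p)\lambda/(m-p)$ produces a contradiction. Thus $\norm{g(\tilde x + a) - g(\tilde x)}_2 \leq \rho_0(\tilde x, \delta)$ uniformly in $a \in \mathcal A$, so setting $\mu(\tilde x) \triangleq \rho_0(\tilde x, \delta)$ witnesses the robustness requirement \eqref{eq:defrobust}. An additional pass over the finitely many candidate attack sets $\mathcal I \in \mathbb C$ absorbs the (already finite) dependence of $\rho_0$ on $\mathcal I$.

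The main obstacle is that the displacement $v$ can point in any direction, and a priori the rate at which $\phi(e_i - \rho w)\T w \to -\lambda\norm{w}_1$ might degrade as $w$ ranges over the sphere; this is precisely why Lemma~\ref{lemma:1}(ii) was strengthened to uniform convergence on compact subsets, which is what allows me to extract a direction-independent threshold $\rho_0$. A minor subtlety is that $F$ need not be strictly convex on its linear region, so $g$ may be multi-valued, but the argument above bounds every minimizer of the attacked problem against every minimizer of the unattacked problem, and the robustness conclusion is independent of the particular selection.
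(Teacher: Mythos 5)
Your proof is correct and follows essentially the same force-balance argument as the paper: you bound the at-most-$p$ malicious gradient contributions by $p\lambda\norm{v}_1$ via Lemma~\ref{lemma:1}(iii), use the uniform convergence of Lemma~\ref{lemma:1}(ii) to show the at least $m-p$ benign sensors contribute a restoring term of order $(m-p)\lambda\norm{v}_1$, and let $2p<m$ (with $\delta$ chosen below $(m-2p)\lambda/(m-p)$, matching the paper's choice) force a contradiction beyond a finite radius. The only difference is cosmetic but slightly cleaner: you center the argument at the unattacked minimizer and work from the stationarity identity, which bounds the displacement $\norm{g(\tilde x+a)-g(\tilde x)}$ directly, whereas the paper shows the directional derivative along rays from the origin is eventually positive and hence bounds $\norm{g(z)}$ itself.
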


\begin{proof}
  Our goal is to prove that there exists a $\beta(\tilde x)$, such that for any $t> \beta(\tilde x)$, $\|u\| = 1$, $a\in\mathcal A$, the following inequality holds:
  \begin{align}
    \sum_{i\in\mathcal S}\frac{\D h(-u,z_i,t)}{\D t} > 0.
    \label{eq:sufficientdiff}
  \end{align}
  As a result, any point $\|\hat x\| > \beta(\tilde x)$ cannot be the solution of the optimization problem since there exists $\epsilon>0$ such that $(\|\hat x\|-\epsilon)\hat x/\|\hat x\|$ is a better point. Therefore, we must have $\|g(z)\|\leq \beta(\tilde x)$ and hence the estimator is robust.

  To prove \eqref{eq:sufficientdiff}, we will first look at benign sensors. We can always find a finite constant $N_i$ depending on $\delta$ and $\tilde x_i$ such that for all $t \geq N_i(\delta,\tilde x_i)$, the following inequality holds:
  \begin{align}
    \frac{\D h(-u,z_i,t)}{\D t} \geq C(u) - \delta = \lambda -\delta,
    \label{eq:deltaapprox2}
  \end{align}
  for any $\|u\|=1$.
  We define $\beta(z) \triangleq \max_{1\leq i\leq m} N_i(\delta,z_i)$ and fix $\delta$ to be
  \begin{align}
    \delta = \frac{(m-2p)\lambda}{m}.\label{eq:defdelta}
  \end{align}
    Hence, for $i = 1,\dots,m$, if $t > \beta_{\delta}(z)$ we know that
  \begin{align}
    \sum_{i\in \mathcal I^c} \frac{\D h(-u,z_i,t)}{\D t}  \geq  (m-p) \left(\lambda-\delta\right),\,\forall \|u\|=1. \label{eq:suff2}
  \end{align}
  We now consider malicious sensors. By Lemma \ref{lemma:1} (iii), we know that for $i\in\mathcal I$, and any $u$
  \begin{align*}
    \phi(z_i - tu)\T tu &= \phi(z_i - 2tu + tu)\T tu\leq C(tu)\\
    \Rightarrow \phi(z_i - tu)\T u & \geq -\lambda.
  \end{align*}
  Then we have
  \begin{align}
    \sum_{i\in \mathcal I} \frac{\D h(-u,z_i,t)}{\D t}  \geq  -p \lambda,\,\forall \|u\|=1. \label{eq:suff1}
  \end{align}
  Hence from \eqref{eq:defdelta},~\eqref{eq:suff1} and~\eqref{eq:suff2}, we know that
  \begin{align*}
    \sum_{i\in\mathcal S} \frac{\D h(-u,z_i,t)}{\D t} \geq  (m-p) \left(\lambda-\delta\right)-p \lambda> 0,
  \end{align*}
  which proves \eqref{eq:sufficientdiff}.
\end{proof}
It is shown that if the number of malicious sensors is less than the good sensors, then the estimator is robust. The intuition is that the sum force injected by any $p$ sensors from the potential field $F$ along the $u$ direction must be able to be balanced by the sum force of the rest $m-p$ sensors, \ie, zero-sum. Otherwise, the optimal estimate must lie in the infinity due to unbalanced driving forces along $u$ and thus violates the robustness defined in \eqref{eq:defrobust}.

We next present a necessary condition for the robustness of the estimator.
\begin{theorem}[Necessary Condition]
  If the following condition is satisfied:
  \[2p > m,\]
  then the estimator is not robust to the attack.
  \label{theorem:necessity1}
\end{theorem}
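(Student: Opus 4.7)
The plan is to exhibit, for any candidate bound $\mu(\tilde x)$, an explicit attack in $\mathcal A$ that violates~\eqref{eq:defrobust}. Fix any $\mathcal I\in\mathbb C$ and any unit direction $u\in\mathbb R^n$ with $\|u\|_1>0$. For each scalar $t>0$, I will construct the attack $a_i(t)=tu-\tilde x_i$ for $i\in\mathcal I$ and $a_i(t)=0$ otherwise, so the corrupted measurements are $z_i(t)=tu$ on $\mathcal I$ and $z_i(t)=\tilde x_i$ on $\mathcal I^c$. Writing $\hat x_t\triangleq g(z(t))$, the goal is to show $\|\hat x_t\|\to\infty$ as $t\to\infty$, whereupon $\|g(\tilde x)-g(\tilde x+a(t))\|\ge\|\hat x_t\|-\|g(\tilde x)\|\to\infty$ and no finite $\mu(\tilde x)$ can bound the difference.

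To show $\hat x_t$ is unbounded, I would argue by contradiction. Suppose along a subsequence $\hat x_t$ remains in some compact set $K$. Since $\sum_i F(z_i-\hat x)$ is convex and differentiable (Proposition~\ref{prop:property}), $\hat x_t$ satisfies the first-order condition $\sum_{i\in\mathcal S}\phi(z_i(t)-\hat x_t)=0$. Taking the inner product with $u$ and splitting along $\mathcal I$ and $\mathcal I^c$ gives
\begin{equation*}
\sum_{i\in\mathcal I}\phi(tu-\hat x_t)\T u+\sum_{i\in\mathcal I^c}\phi(\tilde x_i-\hat x_t)\T u=0.
\end{equation*}
For each $i\in\mathcal I$, Lemma~\ref{lemma:1}(ii) (with $v=-\hat x_t\in -K$) provides $\phi(tu-\hat x_t)\T u\to C(u)=\lambda\|u\|_1$ uniformly on $K$. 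For each $i\in\mathcal I^c$, applying Lemma~\ref{lemma:1}(iii) to both $u$ and $-u$ yields the bound $|\phi(\tilde x_i-\hat x_t)\T u|\le\lambda\|u\|_1$. Combining these,
\begin{equation*}
\sum_{i\in\mathcal S}\phi(z_i(t)-\hat x_t)\T u\;\ge\;p\lambda\|u\|_1-(m-p)\lambda\|u\|_1-o(1)\;=\;(2p-m)\lambda\|u\|_1-o(1),
\end{equation*}
which is strictly positive for $t$ sufficiently large whenever $2p>m$. This contradicts the first-order condition, so no bounded subsequence of $\{\hat x_t\}$ exists, i.e., $\|\hat x_t\|\to\infty$.

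The main obstacle is the bookkeeping around \emph{uniform} versus pointwise convergence in Lemma~\ref{lemma:1}(ii): it is crucial that the $o(1)$ term in the malicious-sensor bound does not depend on where $\hat x_t$ happens to sit, which is exactly what the uniform convergence on the compact set $K$ provides. A secondary subtlety is that the argminimizer $g(z(t))$ need not be unique, but the contradiction only requires that \emph{some} selection satisfies the first-order condition, so any measurable choice works. With these points in place, the construction above produces a family of $(p,m)$-sparse attacks $a(t)\in\mathcal A$ with $\|g(\tilde x+a(t))-g(\tilde x)\|\to\infty$, violating the robustness inequality~\eqref{eq:defrobust} for every finite $\mu(\tilde x)$, and completing the proof.
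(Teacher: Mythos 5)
Your proof is correct and follows essentially the same route as the paper's: the same attack $z_i=tu$ on the $p$ compromised sensors, Lemma~\ref{lemma:1}(ii)'s uniform convergence of the directional derivative to $C(u)$ for those sensors, Lemma~\ref{lemma:1}(iii)'s bound $-C(u)$ for the $m-p$ benign ones, and the count $(2p-m)\lambda\|u\|_1>0$. The only difference is cosmetic --- you phrase it as a contradiction with the first-order stationarity condition on a hypothetically bounded minimizer, whereas the paper directly exhibits a better point than any $\hat x$ in a ball of radius $r$ --- and your explicit attention to where uniformity of the convergence is needed is a welcome clarification rather than a departure.
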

\begin{proof}
  The robustness of the estimator is equivalent to that the optimal estimate $\hat x$ satisfies $\norm{\hat x}\leq \mu(z)$ for all $a\in \mathcal A$, where $\mu$ is a real-valued function. To this end, we will prove that for any $r > 0$, there exists a $y$ such that all $\hat x$ that satisfies $\norm{\hat x}\leq r$ cannot be the optimal solution of \eqref{eq:lasso}.

  We will first look at the compromised sensors. For every $\delta>0$ we can always find a finite constant $N_i(\delta)$ such that for any $\hat x\in\{\hat x:\norm{\hat x}\leq r\}$ and for all $t > N_i$, the following inequality holds:
  \begin{align}
    \frac{\D h(u,z_i-\hat x,t)}{\D t} \geq C(u) - \delta \label{eq:nece1}
  \end{align}
  The inequality is due to the uniform convergence of $h(u,v,t)$ to $C(u)$ on $\{u\}\times \{v:v=z_i - \hat x,\,\|x\|\leq  r\}$.

  Let us choose
  \begin{align*}
    \delta = \frac{2p-m}{m}C(u),
  \end{align*}
  and $t = \max_{i\in\mathcal I} N_i(\delta)$ and $z_i = tu$ for all $i\in \mathcal I$, then we know for any $\|\hat x\|\leq  r$,
  \begin{align*}
    \sum_{i\in\mathcal I} \frac{\D h(u,z_i-\hat x,t)}{\D t} \geq pC(u) - p\delta.
  \end{align*}
  Now let us look at the benign sensors. By Lemma \ref{lemma:1} (iii) we have
  \begin{align}
    \frac{\D h(u,\tilde x_i-\hat x,t)}{\D t} \geq -C(u).\label{eq:nece2}
  \end{align}
  From \eqref{eq:nece1} and \eqref{eq:nece2},
  \begin{multline*}
    \sum_{i\in\mathcal S} \frac{\D h(u,z_i-\hat x,t)}{\D t} \geq (m-p)C(u)-pC(u)+p\delta > 0
  \end{multline*}
  Thus for such a $z_i$ satisfying
  \[ y_i=
    \left\{
      \begin{array}{ll}
        \tilde x_i, & \hbox{if } i\in\mathcal I^c\\
        tu, & \hbox{if } i\in\mathcal I,
      \end{array}
    \right.
  \]
  $\hat x+u$ is a better estimate than all $\hat x$ satisfying $\norm{\hat x}\leq  r$. Since $r$ is an arbitrary positive real number, we can conclude that the estimator is not robust.
\end{proof}

\section{Performance Analysis}\label{section:performanceanalysis}
In the previous section we have studied the robustness of the estimator. Now we focus our attention on the performance of the proposed estimator. We concern two questions in this section. The first one is the sufficient condition that the estimator gives an MMSE estimate when there is no attack. The other one is what is the maximum damage that an attacker can cause to the estimate, \ie, the upper bound of $\norm{g(\tilde x(k))-g(\tilde x(k)+a(k))}$.
\subsection{Without attacks}
When no attacks are present, an MMSE estimate like a Kalman filter provides is still prefered. Notice that the proposed robust estimator indeed probabilistically provides an MMSE estimate. A sufficient condition for providing the MMSE estimate $\hat x_{KF}$ given in \eqref{eq:lse} is given as follows.
\begin{lemma}\label{lemma:mmsecondition}
If $\tilde x \in \mathcal G$, where
\begin{align}
 \mathcal G \triangleq \{\tilde x \in \mathbb R^{mn}:\max_{i\in\mathcal S} \norm{\tilde x_i - \hat x_{KF}}_1\leq \frac{\lambda}{2}\},\label{eq:mmsecondition}
\end{align}
then $\hat x = \hat x_{KF}$.
\end{lemma}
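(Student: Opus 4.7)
The plan is to verify the first-order optimality condition for the convex program \eqref{eq:lasso} (with $z=\tilde{x}$, since no attack is present) at the candidate point $\hat{x} = \hat{x}_{KF}$. By Proposition \ref{prop:property}(i), the objective $\sum_{i\in\mathcal{S}} F(\tilde{x}_i - \hat{x})$ is convex in $\hat{x}$, and by Proposition \ref{prop:property}(iv) it is differentiable, so a stationary point is automatically a global minimizer; thus establishing stationarity at $\hat{x}_{KF}$ will be enough.

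First I would translate the hypothesis into a componentwise bound. Since $\norm{u}_\infty \leq \norm{u}_1$, membership in $\mathcal{G}$ forces $\abs{(\tilde{x}_i - \hat{x}_{KF})[j]} \leq \lambda/2$ for every sensor $i$ and every coordinate $j$. By the explicit form of $f$ in \eqref{eq:f}, every coordinate of every residual $\tilde{x}_i - \hat{x}_{KF}$ then lies in the quadratic regime where $f(\tau)=\tau^2$ and $\nabla f(\tau) = 2\tau$. Consequently, from \eqref{eq:gradient}, $\phi(\tilde{x}_i - \hat{x}_{KF}) = 2(\tilde{x}_i - \hat{x}_{KF})$ for every $i$.

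Next I would sum these gradients and invoke the Kalman-filter decomposition \eqref{eq:lse}, namely $\hat{x}_{KF} = \tfrac{1}{m}\sum_{i\in\mathcal{S}} \tilde{x}_i$. This gives
\begin{equation*}
\sum_{i\in\mathcal{S}} \phi(\tilde{x}_i - \hat{x}_{KF}) \;=\; 2\sum_{i\in\mathcal{S}} (\tilde{x}_i - \hat{x}_{KF}) \;=\; 2\left(\sum_{i\in\mathcal{S}} \tilde{x}_i - m\hat{x}_{KF}\right) \;=\; 0,
\end{equation*}
which is exactly the first-order optimality condition for \eqref{eq:lasso}. By the convexity of the objective, $\hat{x}_{KF}$ is therefore a global minimizer, i.e.\ $\hat{x} = \hat{x}_{KF}$.

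There is no substantive obstacle in the argument; the only delicate point is ensuring that the candidate $\hat{x}_{KF}$ itself puts every sensor's residual inside the smooth quadratic region so that the gradient formula above applies, and this is precisely what membership in $\mathcal{G}$ is engineered to guarantee. I would also note in passing that the $L_1$ bound in \eqref{eq:mmsecondition} is slightly stronger than necessary: the proof only uses the coordinatewise bound, so any hypothesis of the form $\max_i \norm{\tilde{x}_i - \hat{x}_{KF}}_\infty \leq \lambda/2$ would suffice, but the $L_1$ form keeps the statement simple and scalar.
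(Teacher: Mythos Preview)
Your proof is correct and follows essentially the same approach as the paper, which simply notes in one line that condition \eqref{eq:mmsecondition} together with the explicit form \eqref{eq:f} of $f$ places every residual in the quadratic regime, whereupon \eqref{eq:lasso} reduces to the least-squares problem solved by $\hat x_{KF}$. Your version spells out the first-order optimality argument and the use of \eqref{eq:lse} explicitly, and your closing remark that the $L_\infty$ bound would already suffice is a valid observation.
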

\begin{proof}
From \eqref{eq:mmsecondition} and \eqref{eq:f}, we know that $\hat x_{LS}$ is a solution of \eqref{eq:lasso}.
\end{proof}

Now we characterize the pdf of $\tilde x$. Define the local estimation error covariance of the $i$th sensor and the local cross estimation error covariance between the $i$th sensor and the $j$th sensor as
\begin{align*}
  P_{ii}(k)&\triangleq \mathbb{E}[e_i(k)e_i(k)\T|y_i(1),\ldots,y_i(k)],\\
  P_{ij}(k)&\triangleq \mathbb{E}[e_i(k)e_j(k)\T|y_i(1),y_j(1),\ldots,y_j(k),y_j(k)].
\end{align*}
From \eqref{eq:localestimate}, the error dynamics of the $i$th sensor estimate is thus given as follows:
\begin{multline}
e_i(k) = (A-KHA)e_i(k-1)\\
+(m GC-\I_n)w(k)+mG\varepsilon_i(k).
\end{multline}
Note that the local estimator for each sensor is a stable estimator since the spectral radius of $A-KHA$ is less than one \cite{anderson79}. It is easy to see that $P_{ii}(k)$ converges to $\overline{P}_{ii}$ at the steady state, where $\overline{P}_{ii}$ is the unique solution of the following Lyapunov equation of $X$:
\begin{multline}
X=(A-KHA)X(A-KHA)\T\\
+(mGC-\I_n)Q(mGC-\I_n)\T+m^2GRG\T.
\end{multline}
Similarly, $P_{ij}(k)$ converges to $\overline{P}_{ij}$, where $\overline{P}_{ij}$ is the unique solution of the following Lyapunov equation of $X$:
\begin{multline}
X=(A-KHA)X(A-KHA)\T\\
+(mGC-\I_n)Q(mGC-\I_n)\T.
\end{multline}
Denote $\Gamma=\{\overline P_{ij}\}\in \mathbb R^{nm\times nm}$.
Now we know the probability density function of $\tilde x$, \ie,
\[
\tilde x \sim \mathcal N(x , \Gamma),
\]
and thus the distribution of $\hat x_{KF}$. We can compute the probability of generating the MMSE estimate
\begin{align}
\Pr\left(\tilde x\in \mathcal G\right) = \int_{x\in \mathcal G}\mathcal N(x, \Gamma){\rm d}\tilde x.
\end{align}
The integration is not trivial and numerical methods can be used to approximate $\Pr\left(\tilde x\in \mathcal G\right)$. A closed-form solution to $\Pr\left(\tilde x\in \mathcal G\right)$ is left as an open question.

Another interesting observation is that the larger $\lambda$ is, the more likely the MMSE estimate is.

\subsection{Under attacks}
We now consider the worst damage that an attacker can cause, \ie, the maximum deviation between the estimate under attacks and that without attacks. If the necessary condition in Theorem \ref{theorem:necessity1} is violated, the estimator is not robust and thus the deviation can be arbitrarily large. A more interesting question is how to obtain $\mu(\tilde x)$ in \eqref{eq:defrobust} for all possible attacks if the estimator is robust.

Suppose the sufficient condition in Theorem \ref{theorem:sufficient} is satisfied. Let the robust estimate without attacks to be $\hat x_{R} = g(\tilde x)$. Due to the translation invariance, we have
\begin{multline*}
\norm{g(\tilde x)-g(\tilde x+a)}_1\\
 = \norm{\hat x_{R}-g(\tilde x+a)}_1 = \norm{g(z - E\hat x_{R})}_1 \leq \mu(\tilde x).
\end{multline*}
Denote $\tilde z_i \triangleq z_i - \hat x_{R}$, $\tilde z= [\tilde z_1,\ldots,\tilde z_{m}]$, and $\breve x_i \triangleq \tilde x_i - \hat x_{R}$, $\breve x= [\breve x_1,\ldots,\breve x_{m}]$.

Similar to the proof of Theorem \ref{theorem:sufficient}, there exists $\beta^*$ such that for any $\beta\in\{\beta:\norm{\beta}_1>\norm{\beta^*}_1\}$ the following inequality holds:
  \begin{align}
    \sum_{i\in\mathcal S}\phi(\tilde z_i - \beta )\T \sgn{\beta} > 0 \label{eq:forcerequirement}
  \end{align}
In other words, we want to find a $\beta^*$ such that
  \begin{align}
    \sum_{j=1}^n\sum_{i\in\mathcal S}\nabla f\left(\tilde z_i[j] - \beta^*[j] \right)\sgn{\beta^*[j]}= 0,~\forall j=1,\ldots,n, \label{eq:force2}
  \end{align}
where $\tilde z_i[j]$ and $\beta^*[j]$ are the $j$th entry of $\tilde z_i$ and $\beta^*$ respectively.

Define the two mapping $\underline\kappa,\overline\kappa: \mathbb R^m\times\mathbb R\times\mathbb R \mapsto \mathbb R$ for any vector $u$ and scalars $p,m$:
\begin{align*}
&\underline\kappa(u,p,m) \triangleq \\
&~~~~\left\{u[i]:\abs{\{u[j]:u[j]\leq u[i],j\neq i\}} = \left\lfloor\frac{m-2p}{2}\right\rfloor+1\right\},\\
&\overline\kappa(u,p,m) \triangleq \\
&~~~~\left\{u[i]:\abs{\{u[j]:u[j]\geq u[i],j\neq i\}} = \left\lfloor\frac{m-2p}{2}\right\rfloor+1\right\}.
\end{align*}
Let $\zeta_j\triangleq [\breve x_1[j],\ldots,\breve x_m[j]]\T$. Then we denote $(\underline \theta_i,\overline \theta_i)$ to be
\begin{align}
(\underline \theta_j,\overline \theta_j)=(\underline\kappa(\zeta_j,p,m),\overline\kappa(\zeta_j,p,m)),~j=1,\ldots,n. \label{eq:groupmedian}
\end{align}
Now we are ready to present the upper bound on the worst damage.
\begin{theorem}\label{theorem:upper}
The upper bound $\mu(\tilde x)$ is shown as follows :
\begin{align}
\mu(\tilde x) = \norm{\beta^+}_1,
\end{align}
where $\beta^+_j = \max \left(\abs{\underline \theta_j-\lambda/2}, \abs{\overline \theta_j+\lambda/2}\right),~i=1,\ldots,n$.
\end{theorem}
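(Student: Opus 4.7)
The plan is to exploit the coordinate-wise separability of both the objective and of $\beta^+$ itself, so that the overall $L_1$ bound $\|\beta^+\|_1 = \sum_j \beta^+_j$ reduces to a scalar argument for each coordinate $j \in \{1, \ldots, n\}$. By the translation invariance of $g$ used in the derivation leading to (\ref{eq:forcerequirement}), the quantity to bound is $\|g(\tilde z)\|_1$ with $\tilde z_i = \breve x_i + a_i$, and since $\sum_i F(\tilde z_i - \beta) = \sum_{j=1}^n \sum_{i \in \mathcal S} f(\tilde z_i[j] - \beta[j])$, the minimisation decouples into $n$ independent scalar problems $\beta^*[j] = \argmin_{\beta \in \mathbb R} \sum_{i \in \mathcal S} f(\tilde z_i[j] - \beta)$.

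For a fixed $j$, the first-order condition reads $\sum_{i \in \mathcal S} \nabla f(\tilde z_i[j] - \beta^*[j]) = 0$, which is the per-component form of (\ref{eq:force2}). I split the sum into benign sensors ($i \in \mathcal I^c$, for which $\tilde z_i[j] = \breve x_i[j]$) and attacked sensors ($i \in \mathcal I$, for which $\tilde z_i[j]$ is free). The bound $|\nabla f(\tau)| \leq \lambda$ from (\ref{eq:gr1}) shows the attacker's aggregate contribution lies in $[-p\lambda, p\lambda]$, so that the benign partial sum $S_{\mathcal I^c}(\beta) \triangleq \sum_{i \in \mathcal I^c} \nabla f(\breve x_i[j] - \beta)$ must satisfy $|S_{\mathcal I^c}(\beta^*[j])| \leq p\lambda$. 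Convexity of $f$ renders $S_{\mathcal I^c}$ continuous and monotonically non-increasing in $\beta$, so the admissible set of $\beta^*[j]$ is an interval whose endpoints depend on $\mathcal I$.

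Next I maximise $|\beta^*[j]|$ over all $\mathcal I$ of size $p$. To push $\beta^*[j]$ as far positive as possible, the attacker drops the $p$ indices with the smallest $\breve x_i[j]$ from the benign pool, keeping $S_{\mathcal I^c}$ near its maximum $(m-p)\lambda$ across the widest range of $\beta$. Counting how many benign sensors must sit in the saturated region $\{\breve x_i[j] > \beta + \lambda/2\}$ versus $\{\breve x_i[j] < \beta - \lambda/2\}$ in order to achieve $S_{\mathcal I^c}(\beta) = -p\lambda$ yields exactly $\lfloor (m-2p)/2 \rfloor$ benign sensors above, one transitional sensor sitting in the linear segment of $\nabla f$, and all the rest below. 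The transitional sensor is precisely $\underline\theta_j$ from (\ref{eq:groupmedian}), and the $\lambda/2$ half-width of the linear segment produces the offset, giving $\beta^*[j] \leq |\underline\theta_j - \lambda/2|$. The symmetric argument with the attacker instead dropping the $p$ largest $\breve x_i[j]$ bounds the negative push by $|\overline\theta_j + \lambda/2|$, so $|\beta^*[j]| \leq \beta^+_j$. Summing over $j$ yields $\|g(\tilde z)\|_1 \leq \|\beta^+\|_1 = \mu(\tilde x)$.

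The main obstacle is the boundary bookkeeping. Because $\nabla f$ is piecewise linear, saturating at $\pm\lambda$ outside $[-\lambda/2, \lambda/2]$, the extremal $\beta^*[j]$ is attained when exactly one benign sensor lies in the linear segment while all others are saturated; pinpointing which order statistic of $\zeta_j$ this boundary sensor corresponds to requires a careful case split on the parity of $m-2p$, on possible ties in $\zeta_j$, and on the sign of $\beta^*[j]$. Reconciling the resulting rank with the floor function $\lfloor (m-2p)/2 \rfloor + 1$ in the definitions of $\underline\kappa$ and $\overline\kappa$ is the most delicate part of the argument, and also the step most sensitive to the sign convention adopted in (\ref{eq:forcerequirement}).
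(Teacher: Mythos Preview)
Your approach is essentially the same as the paper's: both exploit the coordinate-wise separability of $F$, reduce to the scalar first-order condition \eqref{eq:force3}, bound the attacker's aggregate gradient by $p\lambda$ via the saturation in \eqref{eq:gr1}, and then locate the extremal $\beta^*[j]$ through an order-statistic argument on $\zeta_j$. The paper phrases the last step as a proof by contradiction (assume $\beta^*[j]\notin[\underline\theta_j-\lambda/2,\overline\theta_j+\lambda/2]$ and show the benign gradient sum alone already exceeds $p\lambda$ in magnitude), whereas you argue directly via the monotonicity of $S_{\mathcal I^c}$; these are two presentations of the same mechanism.

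One labeling slip to fix: in your positive-push case you attribute the transitional sensor to $\underline\theta_j$ and write $\beta^*[j]\le|\underline\theta_j-\lambda/2|$, but pushing $\beta^*[j]$ to the right (attacking the $p$ smallest $\breve x_i[j]$) leaves the benign pool consisting of the \emph{largest} values, so the boundary order statistic is $\overline\theta_j$ and the correct bound is $\beta^*[j]\le\overline\theta_j+\lambda/2$. Symmetrically, the negative push gives $\beta^*[j]\ge\underline\theta_j-\lambda/2$. Since $\beta^+_j$ is defined as the maximum of the two absolute values, your final inequality $|\beta^*[j]|\le\beta^+_j$ remains correct, but the intermediate attributions of $\underline\theta_j$ and $\overline\theta_j$ should be swapped.
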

\begin{proof}
A sufficient condition for $\eqref{eq:force2}$ is that for each $j$ the following inequality holds:
  \begin{align}
    \sum_{i\in\mathcal S}\nabla f\left(\tilde z_i[j] - \beta[j] \right)\sgn{\beta^*[j]} = 0.  \label{eq:force3}
  \end{align}
We first show that $\beta^*[j]$ must lie in $[\underline \theta_j-\lambda/2, \overline \theta_j+\lambda/2]$. We prove this by contradiction. Suppose $\beta^*[j]< \underline \theta_j-\lambda/2$. For any possible $\mathcal I^c$, we then have
\[
    \sum_{i\in\mathcal I^c}\nabla f\left(\breve x_i[j] - \beta[j] \right)\sgn{\beta^*[j]} \geq (m-p)\lambda.
\]
This is due to the facts that there are at least $m-p$ points of $\tilde z_i[j]$ are $\lambda/2$ larger than $\beta^*[j]$ from \eqref{eq:groupmedian} and that the maximum gradient is $\lambda$ from \eqref{eq:gr1}.

On the other hand, from Lemma \ref{lemma:1} (iii) we know that for any possible $\mathcal I$,
\[
    \abs{\sum_{i\in\mathcal I}\nabla f\left(\tilde z_i[j] - \beta[j] \right)\sgn{\beta^*[j]}} \leq p\lambda.
\]
Hence \eqref{eq:force3} cannot hold for $\beta^*[j]< \underline \theta_j-\lambda/2$. Similar arguments applies for $\beta^*[j]> \overline \theta_j+\lambda/2$. Therefore, we know that $\beta^*[j]\in[\underline \theta_j-\lambda/2, \overline \theta_j+\lambda/2]$. If we take the maximum over $(\abs{\underline \theta_j-\lambda/2}, \abs{\overline \theta_j+\lambda/2})$ as $\beta^+[j]$, then any $\beta$ satisfying $\norm{\beta}>\norm{\beta^+}$ cannot be $g(z - E\hat x_{R})$.
\end{proof}

\section{Simulation Results}\label{section:simulation}
In this section we illustrate the main results using numerical examples.

Consider a linear system with
\begin{align*}
A&=\begin{bmatrix} 0.95& 1\\ 0& 1.01\end{bmatrix},Q=\begin{bmatrix} 1.5 &1\\1 &2\end{bmatrix}
\end{align*}
is monitored by $m=5$ sensors with
\begin{align*}
C&=\begin{bmatrix} 1 &0\\0 &1\end{bmatrix},R=\begin{bmatrix} 2 &1\\1 &1\end{bmatrix}.
\end{align*}
First we verify the sufficient and necessary conditions for robustness. Assume the number of the malicious sensor is $p=2$. In Fig. \ref{fig:1} we depict the trajectory of the true state and show that the proposed robust estimators give reliable estimates with bounded error.
\begin{figure}
  \centering
  % This file was created by matlab2tikz.
%
%The latest updates can be retrieved from
%  http://www.mathworks.com/matlabcentral/fileexchange/22022-matlab2tikz-matlab2tikz
%where you can also make suggestions and rate matlab2tikz.
%
\definecolor{mycolor1}{rgb}{1.00000,0.00000,1.00000}%
\begin{tikzpicture}

\begin{axis}[%
width=0.951\figurewidth,
height=\figureheight,
at={(0\figurewidth,0\figureheight)},
scale only axis,
separate axis lines,
every outer x axis line/.append style={black},
every x tick label/.append style={font=\color{black}},
xmin=0,
xmax=90,
xlabel={x-axis},
xmajorgrids,
every outer y axis line/.append style={black},
every y tick label/.append style={font=\color{black}},
ymin=0,
ymax=14,
ylabel={y-axis},
ymajorgrids,
axis background/.style={fill=white},
legend style={at={(0.455,0.158)},anchor=south west,legend cell align=left,align=left,fill=white}
]
\addplot [color=red,solid]
  table[row sep=crcr]{%
0	0\\
0.563593982089259	0.291607900635238\\
1.28511687176477	0.265059168625969\\
2.78605219194118	1.0462243089063\\
4.11887838321942	2.04595708021059\\
6.78509299798206	1.32447961101334\\
8.51110357194653	3.25143500492472\\
9.59764578268494	1.92696585265584\\
10.6569690223932	3.1186828002842\\
13.2198820113325	2.4130470173581\\
15.1239718107327	3.90290655283148\\
18.5571212707547	4.04972894266936\\
21.2833735130931	2.16954449136775\\
23.4400435928125	2.62371672398486\\
23.0957444705295	1.42234818363005\\
24.733983065843	4.9145664058965\\
26.3283276996664	3.38480117878789\\
29.0591457813576	4.13914047417967\\
31.8919277341167	6.52813900932508\\
36.23245183304	5.04782796719241\\
39.0361581624189	4.99659923589803\\
43.6043961919294	4.53341511725076\\
46.9563313007125	6.54380095983253\\
50.1670088529806	7.89119970505644\\
56.8917557332188	8.73627623714645\\
65.043373317013	9.98723229421841\\
71.8115320631794	8.19016948059249\\
74.1090355208511	7.79230394529426\\
77.9998182804546	6.76807437533857\\
80.0651193721528	6.92444188426605\\
};
\addlegendentry{True state};

\addplot [color=mycolor1,solid,mark=+,mark options={solid}]
  table[row sep=crcr]{%
0	0\\
2.63010671193603	1.89842231745848\\
2.89951351419362	1.66651649462981\\
3.80597281455778	1.39574903053403\\
5.75248377481253	2.93897852894774\\
9.12589612527163	2.87481087012413\\
9.7757937204883	3.85210413390026\\
10.4014459060456	2.6136004951783\\
11.5471987908064	3.06406649237623\\
14.087222041979	2.27397296925214\\
16.0809483029248	3.95189470441117\\
18.7438061289515	3.2298762757815\\
22.7198618135831	3.07759668914527\\
24.9686507561559	3.56743178968642\\
25.7704915438217	3.33883844087518\\
25.3360606965194	6.43532272518511\\
27.4643827341727	4.17439024396703\\
31.2457905943113	5.98455875050882\\
33.5336761042473	7.9565891356588\\
38.7338341004319	6.29578549597073\\
38.9338613453747	4.68298185058857\\
44.7564825368819	5.88498811921063\\
48.5189863664573	7.16964958775319\\
52.8149467422571	8.77828807676258\\
57.3402477810908	9.70041479833007\\
66.056521621534	11.391389000202\\
72.3256179708269	9.7451626481436\\
74.4699688088352	8.18139919105478\\
79.7295773189273	9.08667765579388\\
81.6267950449689	8.03885135816535\\
};
\addlegendentry{$\text{Robust estimator, }\lambda\text{=0.1}$};

\addplot [color=blue,solid,mark=o,mark options={solid}]
  table[row sep=crcr]{%
0	0\\
2.74965937620006	2.02561223143135\\
2.89951430744346	1.66651481906643\\
3.80597421473048	1.49727235851776\\
5.96535242470196	2.9389692423604\\
9.12590006335146	2.87480946758377\\
9.77579657705985	3.85211024747923\\
10.4014495234265	2.73200345278215\\
11.6666717662974	3.06406466101188\\
14.0872194961584	2.38416270382386\\
16.0809529311218	3.96073172831112\\
18.7438063413926	3.24939639055768\\
22.7198645498445	3.07759389104072\\
24.9686510479549	3.5674334816613\\
25.7704896381019	3.33883524545275\\
25.4686605691014	6.43532233175969\\
27.4643834762527	4.36372798516222\\
31.2457833404626	5.98456130199577\\
33.6122404198003	7.95659443002704\\
38.7338311548063	6.2957836738702\\
39.0817831924508	4.68298441745686\\
44.7564776987733	5.88499046855282\\
48.5189860312876	7.25976321487395\\
52.8149454349911	8.97564358649989\\
57.4772666979751	9.75804247007979\\
66.0565202797469	11.3913895560815\\
72.6050074536119	9.74516288149708\\
74.6158420049793	8.35435465338997\\
79.7295743488808	9.08667639042793\\
81.8183121871118	8.26384949013047\\
};
\addlegendentry{$\text{Robust estimator, }\lambda\text{=1}$};

\addplot [color=black,solid,mark=asterisk,mark options={solid}]
  table[row sep=crcr]{%
0	0\\
5.4637370428356	4.95178469213459\\
5.00757938948159	3.51701508655062\\
6.00986109729143	4.48032431305114\\
8.87847913672756	5.77660637793202\\
11.7663823781848	5.07725237804017\\
12.0457922537195	6.29381554248664\\
13.0719799322574	5.40942232492604\\
14.5668891490567	5.7037605634844\\
16.1118685999039	5.32489563503918\\
18.2236915439209	6.84914187936265\\
20.718825921297	6.2494125036707\\
25.4841204947906	5.76807590787808\\
26.7043868014147	6.03710547726735\\
26.9218091903089	5.08886917901774\\
28.1586394144207	8.12231976894205\\
29.8455203078427	7.32493941690484\\
33.310549355058	8.37364150267799\\
36.3867246776269	10.5170960172147\\
40.5435465910528	8.28733096062756\\
42.0822456744131	7.37346227050845\\
46.8835700507737	7.91092266475607\\
50.6152481392479	10.259767540149\\
54.7747320331241	11.808723709997\\
60.3265445663136	12.6323457987858\\
68.6835271927643	13.5902199818493\\
75.6049969885824	12.3884228007155\\
76.961140002901	11.0632690037988\\
81.7186535404862	10.7140647977952\\
84.1625846061123	11.1359062671442\\
};
\addlegendentry{$\text{Robust estimator, }\lambda\text{=10}$};

\end{axis}
\end{tikzpicture}%
  \caption{Trajectory of the system state and robust estimate with different $\lambda$. The number of malicious sensors is $2$ out of $5$. When $p>m/2$, the estimate goes unbounded. The robust estimator with $\lambda=10$ performs worst.}\label{fig:1}
\end{figure}
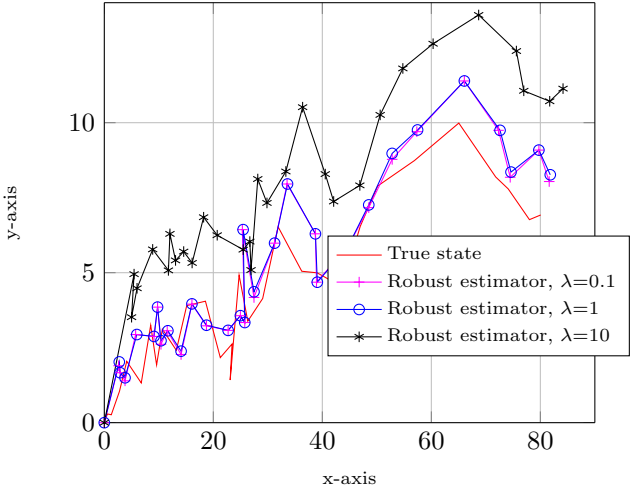
As comparison, we also plot the case when there are no attacks in presence in Fig. \ref{fig:2}. Still all the estimators compute reliable estimates. Notice that the estimator with $\lambda=10$ resembles the Kalman filter most. This collides with the intuition that a large penalty parameter $\lambda$ performs poorly in Fig. \ref{fig:1} but works well without attacks.

In Table. \ref{table:comparison1} we show the relationship between the penalty parameter $\lambda$ and the probability of recovering Kalman filter when there is no attack.
On the other hand, we plot the upper bound $\mu(\tilde x)$ given in Theorem \ref{theorem:upper} and the true gap $\norm{g(\tilde x) -g(z)}_1$ versus time. Notice that when $\lambda = 1$ or $\lambda = 0.1$, the upper bound on the deviation caused by attacks is smaller. In other words, the estimator is more robust with a small $\lambda$. Tradeoff between robustness when the sensors are under attack and the MMSE optimality when the attacker is not present is clearly shown via different $\lambda$'s.
\begin{table}[]\label{table:comparison1}
\centering
\begin{tabular}{|l|c|c|c|c|c|}
  \hline
  $\lambda$ & 1 & 2 & 5 & 10 \\
  \hline
  $\Pr(\hat x = \hat x_{KF})$ & 0.0001  &  0.013  &  0.48 &   0.98  \\
  \hline
\end{tabular}
\caption{Relationship between the penalty parameter $\lambda$ and the probability of recovering Kalman filter when there is no attack.}
\end{table}

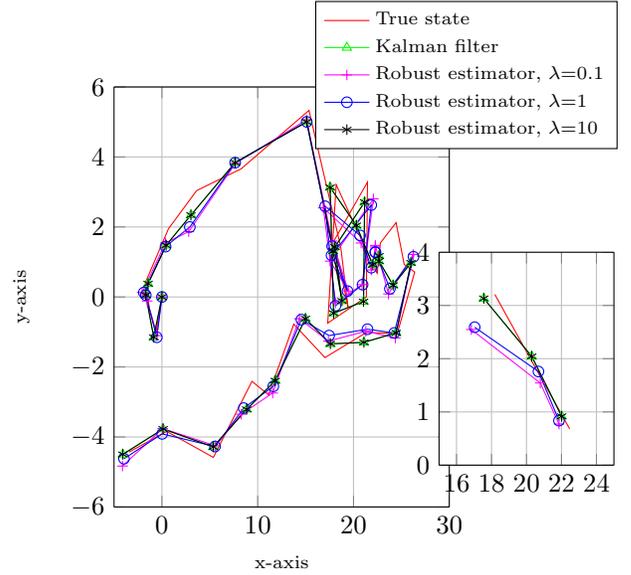
\begin{figure}
  \centering
  % This file was created by matlab2tikz.
%
%The latest updates can be retrieved from
%  http://www.mathworks.com/matlabcentral/fileexchange/22022-matlab2tikz-matlab2tikz
%where you can also make suggestions and rate matlab2tikz.
%
\definecolor{mycolor1}{rgb}{1.00000,0.00000,1.00000}%
\begin{tikzpicture}

\begin{axis}[%
width=0.65\figurewidth,
height=\figureheight,
at={(0\figurewidth,0\figureheight)},
scale only axis,
separate axis lines,
every outer x axis line/.append style={black},
every x tick label/.append style={font=\color{black}},
xmin=-5,
xmax=30,
xlabel={x-axis},
xmajorgrids,
every outer y axis line/.append style={black},
every y tick label/.append style={font=\color{black}},
ymin=-6,
ymax=6,
ylabel={y-axis},
ymajorgrids,
axis background/.style={fill=white},
legend style={at={(0.6,0.852)},anchor=south west,legend cell align=left,align=left,fill=white}
]
\addplot [color=red,solid]
  table[row sep=crcr]{%
0	0\\
-0.612764068181217	-0.996598957337363\\
-1.32597901661523	-0.300972613326265\\
-1.69182441124481	0.433100005019196\\
0.677605564530237	1.94659025029154\\
3.62727379888961	3.04066767335885\\
8.26644697703035	3.66077528108927\\
15.3511147061789	5.33027089286036\\
18.6659216422025	1.42809605816989\\
19.4260574592234	-0.289799248444669\\
17.9692809827141	1.60950483775244\\
21.4583926020674	3.28309665102656\\
21.3418475264901	0.000661488599175364\\
17.3068769699656	-0.747022848067549\\
18.1900960350742	3.20747543894641\\
20.4443047044209	1.86945129273047\\
22.4697517692217	0.680560212892868\\
22.7993198985576	1.54323041654102\\
24.4388190962842	2.12334768299246\\
25.2925105541618	0.926502483412979\\
26.3877772637628	0.71295467339072\\
24.5666508088409	-1.04559129556917\\
21.35412025203	-1.01304827866938\\
17.017897304737	-1.73176895394368\\
13.764895817478	-0.776599530811371\\
11.1397700522058	-2.79813274381321\\
9.39851330862106	-2.4103943835664\\
5.37049548403433	-4.58344681110831\\
0.501490619937627	-3.84147383650986\\
-3.43654560124666	-4.43292893937554\\
};
\addlegendentry{True state};

\addplot [color=green,solid,mark=triangle,mark options={solid}]
  table[row sep=crcr]{%
0	0\\
-0.855875334535381	-1.15151230799717\\
-1.68849581300371	0.054772715038625\\
-1.44334907159996	0.379544222949905\\
0.436612709066712	1.47595164173515\\
3.02813684734958	2.33945964503917\\
7.6349506266133	3.83716990469195\\
15.1173047883677	5.00341463399827\\
17.874244115688	1.3116039741988\\
18.7955092644714	-0.113035834248424\\
18.0497122147994	1.4255402506802\\
21.1547496338278	2.71112542712272\\
21.04813294783	-0.131317152073887\\
17.919321608223	-0.464339798905064\\
17.5580725591421	3.13166448089503\\
20.3004697544645	2.04358694208543\\
22.0111016496698	0.917968689750302\\
22.6643946705606	1.17184740007166\\
22.6532642282775	1.02625537249664\\
24.1356909358202	0.335880006463363\\
25.9943824780951	0.974509156487867\\
24.4557231796817	-1.02391693750498\\
21.0857817907297	-1.29663732338019\\
17.5831491928768	-1.33983567114067\\
14.9845499873997	-0.618016301489551\\
11.8268174647236	-2.39217219492215\\
8.85105630949238	-3.20965853939891\\
5.35638880201799	-4.28358640115931\\
0.123757757767491	-3.77102966374661\\
-4.08238426966086	-4.49676414529346\\
};
\addlegendentry{Kalman filter};

\addplot [color=mycolor1,solid,mark=+,mark options={solid}]
  table[row sep=crcr]{%
0	0\\
-0.547771976975029	-1.10666521368527\\
-1.53708114331171	-0.114407634344102\\
-2.06496329625851	0.127392215009162\\
0.377272220335077	1.54414247853445\\
2.79412547966537	1.8655740797985\\
7.66544867378734	3.81610304827051\\
15.0975099090737	5.06055585029212\\
17.767295783685	1.59870600366388\\
19.4639015627258	0.168247927055758\\
17.556284930983	1.02288663168816\\
22.0744261231339	2.80224481346277\\
20.8918489566653	0.35132217041161\\
18.1165548027645	-0.168780847791927\\
16.8469002136431	2.54792829432002\\
20.7891029955014	1.54396182363696\\
21.8585408822269	0.772062287152781\\
22.2771166514558	1.42865314405215\\
22.2794034753189	1.47132686437481\\
23.6465492951633	0.0852739562624886\\
26.2301868507578	1.21490451515315\\
24.3655831645151	-1.17219478872939\\
21.2849254478128	-0.978466771604357\\
17.3764712648289	-1.26056353011549\\
14.3475443002847	-0.621826642779564\\
11.5640587701536	-2.74375814155507\\
8.28340179704896	-3.34881402662713\\
5.5841779305105	-4.25120644706362\\
0.0757253344317095	-3.75836524493205\\
-4.11489638733815	-4.83023610625458\\
};
\addlegendentry{$\text{Robust estimator, }\lambda\text{=0.1}$};

\addplot [color=blue,solid,mark=o,mark options={solid}]
  table[row sep=crcr]{%
0	0\\
-0.53357152340695	-1.15151230566871\\
-1.6884958019945	0.052985279786204\\
-1.88921054751621	0.127397868996352\\
0.379202898435165	1.43907121287922\\
2.91348004670498	1.99890830011013\\
7.64196715457704	3.83716989776326\\
15.1096777363963	5.00341461418399\\
17.7672875561909	1.45312744152045\\
19.3729228709603	0.168248479225677\\
17.7030468418395	1.18987439139419\\
21.8494161785257	2.62619958843587\\
20.94552646031	0.351321375243753\\
18.1165530803294	-0.251934747122652\\
17.0372008332051	2.59130991296885\\
20.6832285455455	1.75872106619732\\
21.8585414435469	0.836867119093638\\
22.2771185125642	1.28328414667381\\
22.2836027491279	1.2751454998928\\
23.7961574441945	0.235048486942746\\
26.2616277794023	1.14814601775998\\
24.2518860595453	-1.02726562793334\\
21.4560274019102	-0.920756241651297\\
17.4484939070784	-1.10304505917981\\
14.5585054574269	-0.64682704069808\\
11.6243089029549	-2.55385816183849\\
8.54325531625391	-3.17431676274516\\
5.54398170534276	-4.27156580812434\\
0.0220396536830814	-3.91122535955435\\
-3.96039344133641	-4.62095444245848\\
};
\addlegendentry{$\text{Robust estimator, }\lambda\text{=1}$};

\addplot [color=black,solid,mark=asterisk,mark options={solid}]
  table[row sep=crcr]{%
0	0\\
-0.855897624917806	-1.15153264430067\\
-1.68849466126492	0.0547704102591463\\
-1.44333436952088	0.379539851184075\\
0.436613673858556	1.4759496185178\\
3.02814386208364	2.33946762671913\\
7.63495122758205	3.83717040448159\\
15.1173102096305	5.00342646022609\\
17.8742418359046	1.31159082947406\\
18.795512745436	-0.11303614352421\\
18.0497096923198	1.42552737541427\\
21.1547442431263	2.71112793418692\\
21.04812533529	-0.131315268098883\\
17.9193189237547	-0.464328927024038\\
17.5580777620034	3.13166668775766\\
20.3004682969819	2.04357457036854\\
22.0111009778815	0.917963942733629\\
22.66437535953	1.17182897644343\\
22.6532602841912	1.02624058711122\\
24.1356730323191	0.335859667322811\\
25.994372067984	0.974487011694522\\
24.4557262855649	-1.02392875484017\\
21.0857728916426	-1.29661575392636\\
17.5831488058195	-1.33982551913817\\
14.9845489602628	-0.61801076847351\\
11.8268174440765	-2.39217347017815\\
8.85105841554507	-3.20966302251646\\
5.3563894547949	-4.28358706914563\\
0.123759132797963	-3.77102658552871\\
-4.0823840393387	-4.49676387980745\\
};
\addlegendentry{$\text{Robust estimator, }\lambda\text{=10}$};

\end{axis}

\begin{axis}[%
width=0.339\figurewidth,
height=0.507\figureheight,
at={(0.63\figurewidth,0.1\figureheight)},
scale only axis,
separate axis lines,
every outer x axis line/.append style={black},
every x tick label/.append style={font=\color{black}},
xmin=15,
xmax=25,
xmajorgrids,
every outer y axis line/.append style={black},
every y tick label/.append style={font=\color{black}},
ymin=0,
ymax=4,
ymajorgrids,
axis background/.style={fill=white}
]
\addplot [color=red,solid,forget plot]
  table[row sep=crcr]{%
18.1900960350742	3.20747543894641\\
20.4443047044209	1.86945129273047\\
22.4697517692217	0.680560212892868\\
};
\addplot [color=green,solid,mark=triangle,mark options={solid},forget plot]
  table[row sep=crcr]{%
17.5580725591421	3.13166448089503\\
20.3004697544645	2.04358694208543\\
22.0111016496698	0.917968689750302\\
};
\addplot [color=mycolor1,solid,mark=+,mark options={solid},forget plot]
  table[row sep=crcr]{%
16.8469002136431	2.54792829432002\\
20.7891029955014	1.54396182363696\\
21.8585408822269	0.772062287152781\\
};
\addplot [color=blue,solid,mark=o,mark options={solid},forget plot]
  table[row sep=crcr]{%
17.0372008332051	2.59130991296885\\
20.6832285455455	1.75872106619732\\
21.8585414435469	0.836867119093638\\
};
\addplot [color=black,solid,mark=asterisk,mark options={solid},forget plot]
  table[row sep=crcr]{%
17.5580777620034	3.13166668775766\\
20.3004682969819	2.04357457036854\\
22.0111009778815	0.917963942733629\\
};
\end{axis}
\end{tikzpicture}%\\
  \caption{Trajectory of the system state and robust estimate with different $\lambda$ without attacks. The robust estimator with $\lambda=10$ resembles the Kalman filter most.}\label{fig:2}
\end{figure}

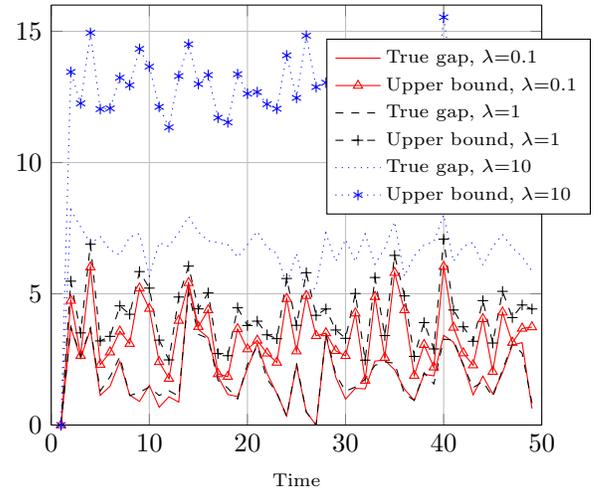
\begin{figure}\label{fig:3}
    \centering
    % This file was created by matlab2tikz.
%
%The latest updates can be retrieved from
%  http://www.mathworks.com/matlabcentral/fileexchange/22022-matlab2tikz-matlab2tikz
%where you can also make suggestions and rate matlab2tikz.
%
\begin{tikzpicture}

\begin{axis}[%
width=0.951\figurewidth,
height=\figureheight,
at={(0\figurewidth,0\figureheight)},
scale only axis,
separate axis lines,
every outer x axis line/.append style={black},
every x tick label/.append style={font=\color{black}},
xmin=0,
xmax=50,
xlabel={Time},
xmajorgrids,
every outer y axis line/.append style={black},
every y tick label/.append style={font=\color{black}},
ymin=0,
ymax=16,
ymajorgrids,
axis background/.style={fill=white},
legend style={at={(1.1,0.92)}, legend cell align=left,align=left,fill=white}
]
\addplot [color=red,solid]
  table[row sep=crcr]{%
1	0\\
2	3.71600999714018\\
3	2.54474748127076\\
4	3.65256529278335\\
5	1.133587433522\\
6	1.48004931254482\\
7	2.41935197356518\\
8	1.14265744568548\\
9	0.899683961580347\\
10	1.51302304875983\\
11	0.678493946315468\\
12	1.07657269212063\\
13	0.876399406619091\\
14	5.31735775688703\\
15	3.6455831923274\\
16	3.32613886402301\\
17	1.84688077051954\\
18	1.16480793008819\\
19	1.08660730664012\\
20	2.16994643357184\\
21	3.08921926499813\\
22	1.97136230882456\\
23	1.20027799319958\\
24	0.349149303190334\\
25	2.27502543847584\\
26	0.496074367021449\\
27	2.81214494624038e-06\\
28	3.40549393875993\\
29	1.82458034805707\\
30	0.984840033620887\\
31	1.40012428848166\\
32	1.37946329885731\\
33	2.44018486722579\\
34	2.46804024174206\\
35	2.1026275463152\\
36	1.37086052204203\\
37	0.939868896928864\\
38	1.93596173179118\\
39	1.87361515001482\\
40	3.41646992565852\\
41	3.15618357763734\\
42	2.31706593388846\\
43	1.15234179930777\\
44	1.86177700989554\\
45	1.18275658164981\\
46	1.99848562580133\\
47	3.04136813561396\\
48	3.14218614416845\\
49	0.629683979957345\\
};
\addlegendentry{$\text{True gap, }\lambda\text{=0.1}$};

\addplot [color=red,solid,mark=triangle,mark options={solid}]
  table[row sep=crcr]{%
1	0\\
2	4.74226513651853\\
3	2.64474605062024\\
4	6.01414630038382\\
5	2.30338453136357\\
6	2.77462994377639\\
7	3.56937472560009\\
8	3.10045502783648\\
9	5.21024419902431\\
10	4.43587977927524\\
11	2.40657481519958\\
12	1.77062898983271\\
13	3.97999623882655\\
14	5.41735841243526\\
15	3.74558176821883\\
16	4.38040342667324\\
17	1.94687512647312\\
18	1.84468320714429\\
19	3.66082996322268\\
20	2.89285662244449\\
21	3.22330273640185\\
22	2.73200497424809\\
23	2.38396409516331\\
24	4.8127116332072\\
25	2.81532542591582\\
26	4.92508239439172\\
27	3.39973789439051\\
28	3.50549640831362\\
29	2.82845240708128\\
30	2.63765106440258\\
31	4.259140072951\\
32	1.69771262801885\\
33	4.88689661434495\\
34	2.56804114802809\\
35	5.80421013544213\\
36	4.38045841109971\\
37	1.88472755534251\\
38	3.05578506236251\\
39	2.20302261845591\\
40	6.03449433709172\\
41	3.71652029596366\\
42	2.74112485340292\\
43	2.28243421905964\\
44	4.04193317130532\\
45	2.03488988223412\\
46	4.29570596115197\\
47	3.14136865243516\\
48	3.67923747069246\\
49	3.73247482825227\\
};
\addlegendentry{$\text{Upper bound, }\lambda\text{=0.1}$};

\addplot [color=black,dashed]
  table[row sep=crcr]{%
1	0\\
2	3.74215874926255\\
3	2.6206411515178\\
4	3.70335065745465\\
5	1.29779263103504\\
6	1.83975255243579\\
7	2.60646173059599\\
8	1.12516097199855\\
9	1.24237479324393\\
10	1.45134694658927\\
11	1.12592532469532\\
12	1.30722239327019\\
13	1.104915470212\\
14	5.05512932868869\\
15	3.45505823504567\\
16	3.29521721022945\\
17	1.7035982631735\\
18	1.40288524994448\\
19	0.992771042839791\\
20	2.31222033029229\\
21	3.01307387240222\\
22	1.75710614701559\\
23	1.23467068128992\\
24	0.317318713460769\\
25	2.359128128\\
26	0.498042600641269\\
27	3.25944777834764e-07\\
28	3.47226023984695\\
29	1.93168407894249\\
30	1.29120780569864\\
31	1.44901037893237\\
32	1.74543604409627\\
33	2.27361640387379\\
34	2.40955761821314\\
35	2.35330936666271\\
36	1.19367950918448\\
37	0.942817534300582\\
38	1.98969404804444\\
39	1.56608671062069\\
40	3.26913109356314\\
41	3.24524702654588\\
42	2.28351178472263\\
43	1.42148110839088\\
44	1.65988728772043\\
45	1.06906789332917\\
46	2.10100375045801\\
47	3.09490037002148\\
48	2.73795956563167\\
49	0.838865220630129\\
};
\addlegendentry{$\text{True gap, }\lambda\text{=1}$};

\addplot [color=black,dashed,mark=+,mark options={solid}]
  table[row sep=crcr]{%
1	0\\
2	5.48158329089758\\
3	3.51067976462807\\
4	6.88875275587333\\
5	3.20610073852451\\
6	3.37865978883501\\
7	4.54000155017844\\
8	4.22032408444676\\
9	5.84653913272099\\
10	5.21674458789834\\
11	3.2352339974807\\
12	2.4823686643292\\
13	4.87719357079943\\
14	6.05512599037128\\
15	4.42653243932926\\
16	5.03422389873968\\
17	2.72375589312431\\
18	2.6437148990057\\
19	4.46700286293957\\
20	3.79327862035482\\
21	3.95801302253246\\
22	3.43907314020829\\
23	3.29073906439244\\
24	5.58345906056283\\
25	3.79943844609656\\
26	5.80386237899749\\
27	4.17925406177677\\
28	4.42288740074627\\
29	3.62134419381178\\
30	3.30410682012405\\
31	5.01179397212752\\
32	2.47306080440076\\
33	5.62033608706926\\
34	3.40955573948129\\
35	6.46241878931912\\
36	4.92032518615231\\
37	2.614460823181\\
38	3.90205243171002\\
39	2.90017113660613\\
40	7.08183067316949\\
41	4.38345738856856\\
42	3.74292314573963\\
43	3.19066538062176\\
44	4.74004034355742\\
45	3.12541862262758\\
46	5.09318567300734\\
47	4.09489969331062\\
48	4.57378057474233\\
49	4.42329951611188\\
};
\addlegendentry{$\text{Upper bound, }\lambda\text{=1}$};

\addplot [color=blue,dotted]
  table[row sep=crcr]{%
1	0\\
2	8.20642581738377\\
3	7.546536356306\\
4	6.91933164018903\\
5	7.17406564489373\\
6	6.66629413925148\\
7	6.47438847806599\\
8	7.16307164020841\\
9	7.29124021992167\\
10	5.67790530157755\\
11	6.94775680164686\\
12	6.80293995044819\\
13	7.33996078208882\\
14	7.94314341186829\\
15	7.35082393766632\\
16	7.0240091460379\\
17	6.94779361851173\\
18	6.87096892019253\\
19	6.39458778702772\\
20	6.86764842446217\\
21	7.37554335107818\\
22	6.76608654024647\\
23	6.53735204249855\\
24	5.3388905205937\\
25	6.54217428264188\\
26	4.97127187830128\\
27	5.24999633494019\\
28	7.31466068710065\\
29	6.23530553431469\\
30	7.07415232268491\\
31	6.22574996934302\\
32	7.27859132361313\\
33	6.128240469458\\
34	6.73987941920158\\
35	7.74399943230839\\
36	5.66294018990385\\
37	6.45542778981079\\
38	6.86116151582519\\
39	6.97838501137822\\
40	8.00385907050142\\
41	6.23680666646095\\
42	6.80278334877185\\
43	6.95570692237414\\
44	6.0884221267992\\
45	6.80055794385917\\
46	7.26290492132911\\
47	6.76067610542217\\
48	6.41429959772226\\
49	5.86117420115808\\
};
\addlegendentry{$\text{True gap, }\lambda\text{=10}$};

\addplot [color=blue,dotted,mark=asterisk,mark options={solid}]
  table[row sep=crcr]{%
1	0\\
2	13.4524268743738\\
3	12.2537341017087\\
4	14.9418830979196\\
5	12.0348071203951\\
6	12.0625160203231\\
7	13.2342008408047\\
8	12.9480027284272\\
9	14.3321043536296\\
10	13.6544093957467\\
11	12.1221334077697\\
12	11.3422999295247\\
13	13.2944662066153\\
14	14.5073472736345\\
15	12.9914727612064\\
16	13.3352155576479\\
17	11.7082634223675\\
18	11.5300839102885\\
19	13.3679414121964\\
20	12.6239436805577\\
21	12.6853613773097\\
22	12.2241669723692\\
23	12.0519975093459\\
24	14.0842338929824\\
25	12.4634682849877\\
26	14.8336253696921\\
27	12.8752089878789\\
28	13.0446795671505\\
29	12.4871460840064\\
30	12.0437969200721\\
31	13.518095828193\\
32	11.4567083472446\\
33	13.6823274073803\\
34	12.2703248430781\\
35	14.2697478899785\\
36	13.4578897558047\\
37	11.4915474399973\\
38	12.5399473712061\\
39	11.7625236723669\\
40	15.5343486923361\\
41	12.8761680986521\\
42	12.4679468969882\\
43	12.0419737506302\\
44	13.4898023036479\\
45	11.9982313718776\\
46	13.8273335139988\\
47	12.6846652637262\\
48	13.0527623447834\\
49	13.0139361568078\\
};
\addlegendentry{$\text{Upper bound, }\lambda\text{=10}$};

\end{axis}
\end{tikzpicture}%
    \caption{Upper bound $\mu(\tilde x)$ and the true gap versus time. The number of malicious sensors is $2$ out of $5$. The robust estimator with $\lambda=10$ has the largest upper bound and the true gap.}
\end{figure}
\section{Concluding Remarks}\label{section:conlusion}

In this work we have studied the robust state estimation problem in the presence of integrity attacks. The attacker can control $p$ out of $m$ sensors and can arbitrarily change the measurement. We have proposed a robust estimation framework and formulated a convex optimization problem with $L_1$ regulation to find the robust estimate. We have also shown the sufficient and necessary conditions for the estimator is robust against the $(p,m)$-sparse attack. Informally speaking, the percentage of compromised sensors should be less than a half to guarantee the robustness. Furthermore, we have analyzed the estimation performance without attacks and under attacks. Further work includes the robust estimation with inhomogeneous sensors.

\bibliographystyle{IEEEtran}
\bibliography{reference}
\end{document}